\newcommand{\ket}[1]{|{#1}\rangle}
\DeclareMathOperator{\area}{area}
\DeclareMathOperator{\Tr}{Tr}
\newtheorem{theorem}{Theorem}
\newtheorem{definition}{Definition}
\title{\boldmath Bit threads and holographic entanglement of purification}
\author{Jonathan Harper and Matthew Headrick}
\affiliation{Martin Fisher School of Physics, Brandeis University, Waltham, Massachusetts 02453, USA}
\preprint{BRX-TH-6642}
\emailAdd{jharper@brandeis.edu}
\emailAdd{headrick@brandeis.edu}
\abstract{Generalizing the bit thread formalism, we use convex duality to derive dual flow programs to the bipartite and multipartite holographic entanglement of purification proposals and then prove several inequalities using these constructions. In the multipartite case we find the flows exhibit novel behavior which allows for a constrained flux on the boundary of the homology region. We show this flux can be made distinct from bipartite terms and reflects the truly multipartite portion of the holographic entanglement of purification.}
\begin{document}
\maketitle
\flushbottom

\section{Introduction}

The study of geometry has been key to our current understanding of quantum gravity. This is most strongly evidenced by the Ryu-Takayanagi (RT) formula, which allows one to calculate the entanglement entropy as the area of a bulk minimal surface for systems with holographic duals (in static states or states with time-reflection symmetry). The entanglement entropy is the canonical measure of entanglement for pure states. Recent work has attempted to define a similar measure of entanglement for a mixed state on a bipartite region $AB$ as the cross section $E_{w}(A:B)$ of the joint homology region $r(AB)$, the bulk region bounded by $AB$ and its RT surface $m(AB)$ (see fig.\ \ref{fig:bipEOP}).\footnote{For this paper we will only be working with the static or time reflection-symmetric spacetimes, where our slice is a moment of time symmetry. The homology region $r(AB)$ is a slice of the entanglement wedge $E_{W}[AB]$. As we will briefly discuss in sec.\ \ref{sec4}, the conjecture has a natural generalization to the covariant case.} 
More precisely, $E_w(A:B)$ is the area of the minimal surface in $r(AB)$ homologous to $A$ relative to $m(AB)$; we will denote this surface $b_p(A:B)$.

\begin{figure}[hbt]
\centering  
\includegraphics[width=.4\textwidth]{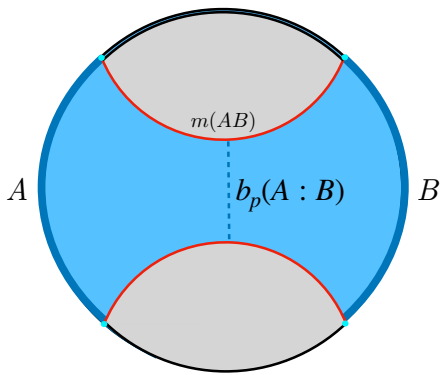}
\caption{\label{fig:bipEOP}
Illustration of the homology-region cross section $E_w(A:B)$, conjectured to be dual to the entanglement of purification \eqref{EOPdef}. The $AB$ homology region $r(AB)$ is shown in blue. $E_w(A:B)$ is defined as the area of the minimal surface $b_p(A:B)$ in $r(AB)$ homologous to $A$ relative to $m(AB)$.
}
\end{figure}

This geometric quantity is conjectured to be dual to a certain measure of quantum and classical correlations in the boundary theory, namely the entanglement of purification (EOP) \cite{Umemoto_2018,Nguyen2018}.\footnote{Several competing conjectures have also appeared in the literature. In \cite{Dutta:aa} the homology-region cross section was calculated to be equal to the reflected entropy: half the entanglement entropy of a canonical purification whose dual geometry is a wormhole formed by gluing two copies of the homology region. In \cite{Kudler-Flam:2018aa} the logarithmic negativity was proposed as a dual to the homology region cross section and in \cite{Tamaoka_2019} the odd entanglement entropy.} (For recent progress in the context of the surface/state correspondence \cite{Miyaji:2015aa} see \cite{Bao:2018ac,Guo:2019aa}, and  in terms of bit threads see \cite{bao:aa}.) The EOP is defined as follows \cite{Terhal:2002aa}:
\begin{equation}\label{EOPdef}
E_{p}(\rho_{AB}) = \min_{\ket{\psi}_{AA'BB'}:\atop\psi_{AB}=\rho_{AB}}
S(\psi_{AA'})\,;
\end{equation}
here the minimization is over purifications $\ket{\psi}_{AA'BB'}$ of $\rho_{AB}$, where $A'$, $B'$ are auxiliary systems. In general this is a very difficult quantity to calculate, especially in QFTs (see e.g.  \cite{Bhattacharyya2018,Caputa_2019,Bhattacharyya:2019aa,Nguyen2018}). Despite this difficulty several general bounds are know for the EOP \cite{Bagchi:2015aa}:
\begin{outline}
\1 $\frac{1}{2}I(A:B) \leq E_{p}(A:B) \leq \min[S(A),S(B)]$
\1 $E_{p}(A:BC) \geq E_{p}(A:B)$
\1 $E_{p}(A:BC) \geq \frac{1}{2}I(A:B) + \frac{1}{2}I(A:C)$
\1 If $ABC$ is pure then $E_{p}(A:B) +E_{p}(A:C) \geq E_{p}(A:BC)$
\end{outline}
It has been shown via geometric proofs using surfaces that the minimal homology region cross section $E_{w}(A:B)$  too satisfies these conditions lending credibility to the conjecture 
\begin{equation}
E_{p}(A:B)=E_{w}(A:B)\,.
\end{equation}
Additionally, it was shown that the minimal homology region cross section satisfies the superadditivity condition
\begin{outline}
\1 $E_{w}(A\tilde{A}:B\tilde{B}) \geq E_{w}(A:B) + E_{w}(\tilde{A}:\tilde{B})$
\end{outline}
which $E_p$ does not always satisfy. If the conjecture holds, then this inequality would be a necessary criterion for a system to have a holographic dual; this is similar to how the holographic entanglement entropy always satisfies monogamy of mutual information (MMI), even though general quantum systems may violate MMI \cite{Hayden:2011ag,Headrick2014}.

In this paper we take a complementary approach to the holographic surface proposals: using the full power of convex duality and the max flow-min cut theorem (MFMC), we derive the dual convex flow program
\begin{equation}\label{floweopintro}
E_{w}(A:B) \coloneqq \max_{v^{\mu}} \int_{A} \sqrt{h} \,n_{\mu}v^{\mu} \quad \text{such that} \quad \nabla_{\mu}v^{\mu} = 0, \quad |v| \leq 1, \quad n_{\mu}v^{\mu} |_{m(AB)} = 0\,,
\end{equation}
which works by explicitly forcing any flow to remain inside the homology region $r(AB)$. This result was anticipated by \cite{Nguyen2018,Kudler-Flam:2018aa,Agon:2018aa,Bao:2018ac,Ghodrati:2019aa,Kudler-Flam:2019aa}. Using this construction we reprove the above bounds and interpret the result in terms of bit threads.

\begin{figure}[H]
\centering
\includegraphics[width=.5\textwidth]{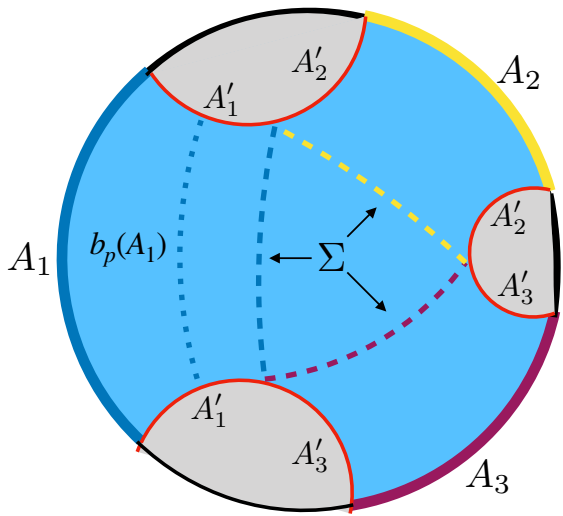}
\caption{\label{fig:multEOP}
The holographic multipartite EOP proposal of \cite{Umemoto2018}, illustrated for the case of three regions. The joint RT surface $\mathcal{O}\coloneqq m(\mathcal{A})$ is shown in red. This is partitioned into regions $A_i'$, and the total area of the corresponding minimal surfaces $m(A_iA_i')$ is minimized over partitions. The minimal surfaces $\Sigma(A_{i})$ are shown as colored dashed lines; their union is $\Sigma$. Also shown as a dotted line is the minimal surface $b_p(A_1)$ which computes the bipartite EOP $E_w(A_1)$.
}
\end{figure}

We then generalize our analysis to the case of multipartite EOP. Given a state $\rho_{\mathcal{A}}$ on a set  $\mathcal{A} = \{A_{i}\}$ of systems, this quantity is defined as
\begin{equation}\label{ITMEoPintro}
E_{p}(\mathcal{A}) = \min_{\ket{\psi}_{\mathcal{A}\mathcal{A}'}:\atop\psi_{\mathcal{A}}=\rho_{\mathcal{A}}} \sum_{i} S(\psi_{A_{i}A'_{i}})\,,
\end{equation}
where $\mathcal{A}' = \{A'_{i}\}$ is a set of auxiliary systems and the minimization is over purifications $\ket{\psi}_{\mathcal{A}\mathcal{A}'}$ of $\rho_{\mathcal{A}}$ \cite{Bao:2018aa,Umemoto2018}. The holographic dual to \eqref{ITMEoPintro} was conjectured by \cite{Umemoto2018} to be given as the area of a certain closed surface $\Sigma$ defined as follows. Consider all possible ways of partitioning the joint RT surface $\mathcal{O}\coloneqq m(\mathcal{A})$  into non-overlapping regions $A_i'$, and for each partition find the homologous minimal surface $m(A_iA_i')$; $\Sigma$ is their union, minimized over partitions (see fig.\ \ref{fig:multEOP}):
\begin{equation}\label{umemotoprop1}
E_{w}(\mathcal{A}) \coloneqq \min_{\{A_i'\}=\atop\text{partition of }\mathcal{O}} \sum_{i} \area(m(A_{i}A_{i}')) =  \sum_{i} \area(\Sigma(A_{i}))= \area(\Sigma)\,.
\end{equation}

To find the flow program dual to \eqref{umemotoprop1}, we must first make it into a convex program; this involves certain subtleties not present in the proof of the usual max flow-min cut theorem (see appendix \ref{sec:relax}). Modulo those issues, we find that the dual flow program involves a set $\mathcal{V}:=\{v_i\}$ of flows, one for each region $A_i$, along with a function $\alpha$ on $\mathcal{O}$ which defines a boundary condition for all of the flows:
\begin{equation}\label{meopintro}
\begin{split}
E_{w}(\mathcal{A}) = \max_{\mathcal{V},\alpha} \left(\sum_{i}\int_{A_{i}}\sqrt{h}\, n_{\mu}v^{\mu}_{i} + \int_{\mathcal{O}}\sqrt{h}\,\alpha \right) \\ \text{such that}\quad\nabla_{\mu}v_{i}^{\mu} = 0\,, \; |v_{i}| \leq 1\,, \;\left.n_{\mu}v^{\mu}_{i}\right|_{\mathcal{O}} = \alpha\,.
\end{split}
\end{equation}
We can understand this program heuristically as follows. If we set $\alpha=0$, then the flows are decoupled from each other; each $v_i$ is simply maximizing the flux from the corresponding region $A_i$ to the other regions $\cup_{j\neq i}A_j$. The optimal value is then just the sum of those bipartite cross-sections:
\begin{equation}\label{bpsat}
\int_{A_i}\sqrt h\,n_\mu v^\mu_i = E_w(A_i)
\end{equation}
where for brevity we write $E_w(A_i)$ for $E_w(A_i:\cup_{j\neq i}A_j)$. 
Turning on $\alpha$ allows each flow $v_i$ to have additional flux coming out of $\mathcal{O}$. However, the flows are now coupled to each other via the boundary condition $n_\mu v_j^\mu|_{\mathcal{O}}=\alpha$. This constraint forces the other flows $v_j$ ($j\neq i$) to also have this additional flux, limiting the amount of flux they can send to different regions. A maximal flow configuration is found by balancing the competition to maximize the flux of each flow. The duality guarantees this occurs exactly when each flow saturates on the corresponding portion of $\Sigma$.

It is clear from the definition \eqref{ITMEoPintro} that the multipartite EOP cannot be less than the sum of the respects bipartite EOPs:
\begin{equation}\label{multvsbi1}
E_p(\mathcal{A}) \ge \sum_iE_p(A_i)\,,
\end{equation}
This suggests that the residue
\begin{equation}\label{residue}
E_p(\mathcal{A}) - \sum_iE_p(A_i)\,,
\end{equation}
reflects the amount of entanglement that is ``truly multipartite'' among the parties of $\mathcal{A}$. The conjectured holographic duals obey the same inequality:
\begin{equation}\label{multvsbi}
E_w(\mathcal{A}) \ge \sum_iE_w(A_i)\,.
\end{equation}
This can be shown from the minimal surfaces \cite{Umemoto2018}, or from the flows using the above analysis, since the right-hand side is the maximum value of the objective subject to the additional constraint $\alpha=0$ (see Theorem \ref{T1} below). 
 
We will show that, within the set of all solutions (or maximizing configurations) of \eqref{meopintro}, there exists a subset of particularly nice solutions in which \eqref{bpsat} holds, i.e.\ the flux of $v_{i}$ emanating from $A_{i}$ saturates on the bipartite surface $b_p(A_{i})$. Hence the last term in the objective, the flux on $\mathcal{O}$, equals the residue of \eqref{multvsbi}:
\begin{equation}\label{Oflux}
\int_{\mathcal{O}}\sqrt h\,\alpha = E_w(\mathcal{A}) - \sum_iE_w(A_i)\,,
\end{equation}
suggesting again that this quantity reflect the truly multipartite entanglement.

The picture in terms of bit threads is that we have a different ``color'' of thread for each region $A_i$. These types of threads do not interact with each other in the bulk. They are only coupled via the constraint that, at any point on $\mathcal{O}$ where a thread of one color begins, a thread of each other color must also begin. In the ``nice'' solutions mentioned in the last paragraph, the $A_i$ threads connecting $A_i$ to $A_j$ ($j\neq i$) represent bipartite entanglement, while the ones connecting $\mathcal{O}$ to $A_j$ represent truly multipartite entanglement.

In section \ref{sec2} we articulate how to express the minimal homology region cross section in term of flows by utilizing relative homology and convex duality. We prove a simple theorem which describes how the addition of constraints affects convex optimization problems. Utilizing this theorem we reprove that all of the bipartite inequalities which the EOP satisfies are also satisfied by the minimal homology region cross section. In section \ref{sec3} we generalize to the multipartite case and provide a derivation using convex duality and proofs of several bounds. We find that with some simple restrictions we can ``gauge" to a class of thread configurations which have distinct bipartite and multipartite contributions. In \ref{sec4} we comment on some future directions of potential work. In appendix \ref{sec:mf} we review the definition and basic properties of the multiflow as well as give a flow-based proof of MMI. In appendix \ref{sec:relax} we discuss a subtlety in the convex relaxation of the holographic proposal \eqref{umemotoprop1}. In appendix \ref{sec:bulk} we dualize another interesting class of surfaces which allow for intersection in the bulk. Comparing to the holographic multipartite EOP, we demonstrate a common mechanism for how the manifold is able to provide the extra flux necessary to overcome geometric obstacles. Finally, in appendix \ref{sec:squashed} we provide evidence using an explicit bit thread construction of the multipartite information that the multipartite squashed entanglement will holographically always be saturated.

The methods of convex optimization, convex duality, and convex relaxation are used throughout this paper. Those unfamiliar with these methods may find it useful to review the relevant sections of \cite{Boyd_2004,Freedman2017,Headrick_2018}.

While this paper was in preparation, the paper \cite{Du:2019aa} appeared, which has some overlap with section \ref{sec2} of this paper.

\section{Bipartite entanglement of purification}\label{sec2}
\subsection{Flows and relative homology}
Starting with a Riemannian manifold $M$ which is dual to a state of a boundary CFT on $\partial M$, the entanglement entropy of a boundary region $A \subset \partial M$ is given by the RT formula \cite{Ryu_2006} as the minimal surface homologous to $A$
\begin{equation}
S(A) = \min_{m\sim A} \area(m).
\end{equation}
We define a flow $v^{\mu}$ as a norm-bounded divergenceless vector field on $M$. That is $\nabla_{\mu}v^{\mu} = 0$ and $|v^{\mu}| \leq 1$. Via an application of convex duality the RT formula can be expressed in terms of flows as the maximum flux through $A$ among all choices of flows \cite{Freedman2017}
\begin{equation}
S(A)= \max_{v^{\mu}} \int_{A} \sqrt{h} n_{\mu}v^{\mu}.
\end{equation}

It was shown in \cite{Headrick_2018} that we can make the following generalization of the standard max flow-min cut (MFMC) theorem for flows: If we define the minimal surface with respect to the relative homology $A\,\text{rel}\,\mathcal{O}$ of a boundary region $\mathcal{O}$ this has the effect of allowing the minimal surface to begin or end on $\mathcal{O}$, increasing the space of allowed surface. On the flow side this is equivalent to imposing the constraint $n_{\mu}v^{\mu}|_{\mathcal{O}} = 0$. This means no threads can end on or pass through $\mathcal{O}$.

\begin{figure}[H]
\centering
\includegraphics[width=.65\textwidth]{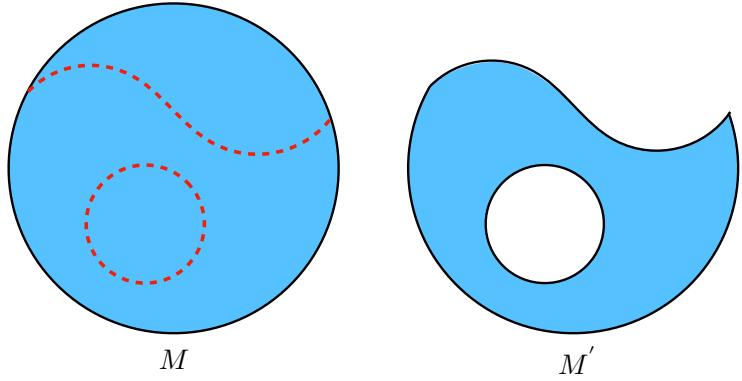}
\caption{\label{fig:shrinkingm}Shrinking $M$ to $M'$.}
\end{figure}

In order to make use of this we can further generalize to the case when $\mathcal{O}$ is instead an arbitrary codimension 1 surface whose boundary lies entirely on $\partial M$. In other words we no longer require that $\mathcal{O}$ be restricted to lie on the boundary $\partial M$. To prove this result we apply convex dualization to the max flow program
\begin{equation}
\max_{v^{\mu}} \int_{A} \sqrt{h} n_{\mu}v^{\mu} \quad \text{s.t.} \quad n_{\mu}v^{\mu}|_{\mathcal{O}}=0.
\end{equation}
First we construct the Lagrangian function on $M$
\begin{equation}
L(v^{\mu},\psi,\phi,\gamma) = \int_{A}\sqrt{h}n_{\mu}v^{\mu} + \int_{M}\sqrt{g}[-\psi\nabla_{\mu}v^{\mu} +\phi(1-|v^{\mu}|)] + \gamma \int_{\mathcal{O}}\sqrt{h}n_{\mu}v^{\mu}.
\end{equation}
To proceed we make the following observation: since the flow cannot penetrate the surface $\mathcal{O}$ and the minimal surface will always be located where the threads achieve their maximum density, the minimal surface corresponding to the dual solution can never be located in a region of the manifold contained entirely in $\mathcal{O}$ or between the boundary and $\mathcal{O}$. As a result without loss of generality we are free to instead consider the manifold $M'$ which is constructed by shrinking the bulk by removing all regions the threads cannot reach (see fig.\ \ref{fig:shrinkingm}). This ensures $\mathcal{O}$ will be a part of the boundary $\partial M'$.

Now that $\mathcal{O}$ is a part of the boundary we can make the constraint $n_{\mu}v^{\mu}|_{\mathcal{O}}=0$ implicit and integrate by parts
\begin{equation}
L(v^{\mu},\psi,\phi) = \int_{\partial M'\backslash \mathcal{O}}\sqrt{h}n_{\mu}v^{\mu}(\chi_{A}-\psi) + \int_{M'}\sqrt{g}[v^{\mu}\partial_{\mu}\psi +\phi(1-|v^{\mu}|)].
\end{equation}
Integrating out $v^{\mu}$ and minimizing we get the dual program
\begin{equation}
\min_{\psi,\phi} \int_{M'}\sqrt{g}\phi \quad s.t. \quad \psi |_{\partial M' \backslash \mathcal{O}} = \chi_{A}, \quad \phi \geq |\partial_{\mu}\psi| = \min_{m\sim A\,\text{rel}\,\mathcal{O}} \area(m) \; \text{on} \, M'.
\end{equation}
We may now view the surface as being located on $M$ instead of $M'$ and arrive at the MFMC result
\begin{equation}\label{eq:rhmfmc}
\max_{v^{\mu}} \int_{A} \sqrt{h} n_{\mu}v^{\mu} \quad \text{s.t.} \quad n_{\mu}v^{\mu}|_{\mathcal{O}}=0 \overset{\text{MFMC}}{\iff} \min_{m\sim A\,\text{rel}\,\mathcal{O}} \area(m) \; \text{on} \, M.
\end{equation}

\subsection{Flow formulation}

\begin{figure}[H]
\centering
\includegraphics[width=.4\textwidth]{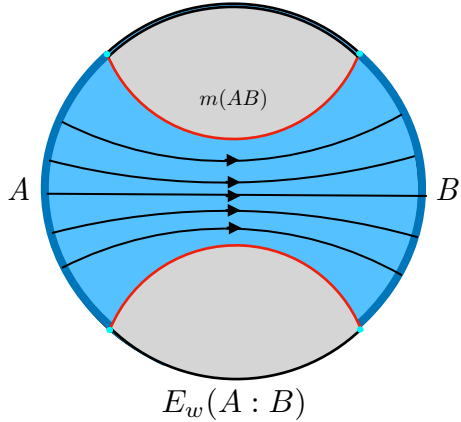}
\caption{\label{fig:eopboundary}A maximal flow of \eqref{floweop} whose flux calculates $E_{w}(A:B)$.}
\end{figure}

One can make use of \eqref{eq:rhmfmc} to define $E_{w}(A:B)$ in terms of flows: First, determine the homology region $r(AB)$ and its boundary in the bulk. From this construct the submanifold $M'$ with boundary $\partial M' = \partial r(AB) = A \cup B \cup m(AB)$. By using relative homology to force the thread configurations to remain in $M'$ we can then by \eqref{eq:rhmfmc} define $E_{w}(A:B)$  in terms of flows:
\begin{equation}\label{floweop}
E_{w}(A:B) \coloneqq \max_{v^{\mu}} \int_{A} \sqrt{h} n_{\mu}v^{\mu} \quad s.t. \quad n_{\mu}v^{\mu} |_{m(AB)} = 0 \iff \min_{b_{p}(A:B)\sim A \, \text{rel} \, m(AB)} \area(b_{p}(A:B))
\end{equation}
where $v^{\mu}$ is a flow (see fig.\ \ref{fig:eopboundary}). We stress that the explicit construction of $M'$ is not important since we can view the flow as being either on $M$ or $M'$. The existence of $M'$ provides a simple argument that all the standard results of convex optimization will carry over to this submanifold.

\subsection{Relaxation of the homology region}
In this subsection we will reformulate our flow program for the minimal homology region cross section by removing the need to refer to the boundary of the homology region. To provide some intuition consider the following: as before, split the boundary $\partial M$ into three regions $A,B,R:=(A\cup B)^{c}$ and calculate the maximum flux from $A$ to $B$ such that $n_{\mu}v^{\mu}|_{R}=0$. Such a maximum flow will not be contained to $r(AB)$. What we need is a way to force the threads to remain in $r(AB)$ so that the flux will correctly calculate $E_{w}(A:B)$.

\begin{figure}[hbt]
\centering
\includegraphics[width=.75\textwidth]{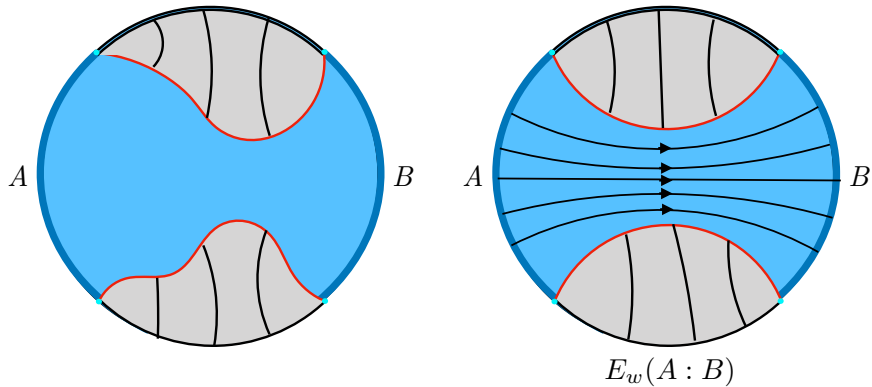}
\caption{\label{fig:generateew}Left: The flow $x^{\mu}$ acts to change the geometry $v^{\mu}$ can probe. Right: On the reduced manifold the maximum flux of $v^{\mu}$ gives the holographic EOP.}
\end{figure}

To do this consider an arbitrary partition of the manifold $M$ by a surface homologous to $R$, call it $\mathcal{O}$. For such a choice of $\mathcal{O}$ we calculate a maximum flow from $R$ to $\mathcal{O}$ and then minimize the maximum flux with respect to choices of partition. This has the effect of forcing $\mathcal{O}$ to extend far enough into the bulk such that the region between $R$ and $\mathcal{O}$ contains the minimal surface. Call this set of partitions $\mathcal{X}$.  By the divergencelessness condition all such elements of $\mathcal{X}$ will allow the same maximal flux. Next we calculate the maximum flow from $A$ to $B$ such that $n_{\mu}v^{\mu}|_{\mathcal{O}} =0$ where $\mathcal{O} \in \mathcal{X}$ . By construction the partition which gives the maximum space for these threads will be the one where $\mathcal{O}$ is the minimal surface homologous to $R$. $E_{w}(A:B)$ is the flux of this flow (see fig.\ \ref{fig:generateew}):
\begin{equation}
\mathcal{X} = \{\mathcal{O}\} \text{ s.t. } \min_{\mathcal{O}} \max_{x^{\mu}} \int_{R} x
\end{equation}
\begin{equation}
E_{w}(A:B) = \max_{v^{\mu},\mathcal{O} \in \mathcal{X}} \int_{A} v \text{ s.t. } n_{\mu}v^{\mu}|_{\mathcal{O}} =0.
\end{equation}

As a result of maximizing the flux on $\mathcal{O}$, the flows naturally find the minimal surface homologous to $\mathcal{O}$ which forms the boundary of the homology region. From this perspective the threads of $x^\mu$ act to change the geometry of the manifold by reducing the region that the threads of $v^\mu$ can occupy.
This means that there are a class of partitions of the manifold on which we will be able to reproduce $E_{w}(A:B)$. This could be anticipated by noting that the maximal flow only saturates at the minimal homology region cross section. There is freedom in how the threads are arranged which gives rise to a number of allowed partitions. While we will not in general make use of this additional freedom, it is appealing to see how the threads can give rise to the change in geometry. Since the homology region can always be canonically chosen using this procedure, we will in general make this choice leaving the generation of the manifold partition via threads implicit.
\subsection{Adding constraints}
The following bound will be used throughout the paper and is a key tool to proving many results. Therefore we state it as a theorem:
\begin{theorem} \label{T1} Let $P$ be a convex maximization program:
\begin{equation}
P: maximize \ f_{0}(y) \ subject \ to \ f_{i}(y) \leq0 \ \forall i \ and \ h_{j}(y) =0 \ \forall j
\end{equation}
with solution $y^{*} \in \mathcal{F}$ where $\mathcal{F}$ is the set of feasible points. Let $\tilde{P}$ be a second convex maximization program obtained by imposing an additional set of constraints $g_{m}(y) \leq0 \ \forall m \ and \ l_{n}(y) =0 \ \forall n$ on $P$:
\begin{equation}
\tilde{P}: maximize \ f_{0}(y) \ subject \ to \ f_{i}(y) \leq0 \ \forall i \ , \ h_{j}(y) =0 \ \forall j, \ g_{m}(y) \leq0 \ \forall_{m} \ and \ l_{n}(y) =0 \ \forall n
\end{equation}
and similarly define $\tilde{y}^{*} \in \tilde{\mathcal{F}}$. Then $f_{0}(y^{*}) \geq f_{0}(\tilde{y}^{*})$.
\end{theorem}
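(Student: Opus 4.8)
The plan is to prove the inequality by a simple feasible-set inclusion argument; the convexity hypothesis plays no essential role in the inequality itself and serves only to guarantee that the optima are well-defined and fit into the strong-duality framework used elsewhere in the paper. The key observation is that $\tilde{P}$ is obtained from $P$ by \emph{adding} constraints, so any point satisfying all the constraints of $\tilde{P}$ automatically satisfies all the constraints of $P$.

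First I would establish the set inclusion $\tilde{\mathcal{F}} \subseteq \mathcal{F}$. Indeed, if $y \in \tilde{\mathcal{F}}$ then $y$ satisfies $f_i(y) \leq 0$, $h_j(y) = 0$, $g_m(y) \leq 0$, and $l_n(y) = 0$; in particular it satisfies the subset of conditions $f_i(y) \leq 0$ and $h_j(y) = 0$ that define membership in $\mathcal{F}$, so $y \in \mathcal{F}$.

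Next I would combine this inclusion with the definitions of the two optima. Since $\tilde{y}^* \in \tilde{\mathcal{F}} \subseteq \mathcal{F}$, the point $\tilde{y}^*$ is in particular a feasible point of the original program $P$. Because $y^*$ maximizes $f_0$ over all of $\mathcal{F}$, while $\tilde{y}^*$ is merely one element of $\mathcal{F}$, optimality of $y^*$ immediately yields $f_0(y^*) \geq f_0(\tilde{y}^*)$, which is the claim.

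There is essentially no genuine obstacle here: the result is just the elementary monotonicity of the optimal value of a maximization program under shrinking of the feasible region. The only point requiring any care is the implicit assumption that $\tilde{\mathcal{F}}$ is nonempty, so that $\tilde{y}^*$ exists at all; if $\tilde{\mathcal{F}} = \emptyset$ the statement is vacuous. It is worth emphasizing that nothing in the argument uses the specific form of the objective or constraints, nor convexity — the inequality is purely set-theoretic — which is exactly why the theorem can be applied so freely to the various flow programs appearing throughout the paper.
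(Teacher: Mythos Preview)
Your proof is correct and follows essentially the same approach as the paper: both rest on the inclusion $\tilde{\mathcal{F}}\subseteq\mathcal{F}$ and the optimality of $y^*$ over $\mathcal{F}$, with the paper merely phrasing the final step as a contradiction rather than directly. Your added remarks that convexity is inessential to the inequality and that $\tilde{\mathcal{F}}\neq\emptyset$ is implicitly assumed are accurate and appropriate.
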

\begin{proof}
Suppose the opposite that is $f_{0}(y^{*}) < f_{0}(\tilde{y}^{*})$. By construction $\tilde{\mathcal{F}} \subset \mathcal{F}$ which means $\tilde{y}^{*} \in \mathcal{F}$ This leads to a contradiction since then $y^{*}$ is not maximal in $\mathcal{F}$.
\end{proof}

\begin{figure}[H]
\centering
\includegraphics[width=.5\textwidth]{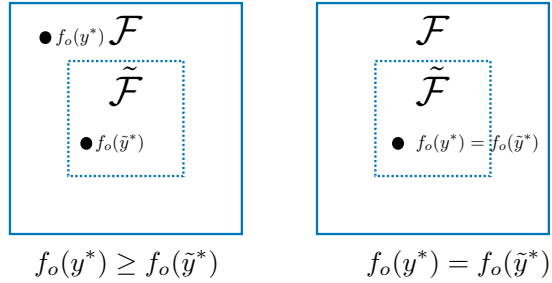}
\caption{\label{fig:constraint}The two possible cases: In each $f_{0}(y^{*}) \geq f_{0}(\tilde{y}^{*})$.}
\end{figure}

In other words there are two possible cases (see fig.\ \ref{fig:constraint}): If $y^{*} \in \tilde{\mathcal{F}}$ then necessarily $y^{*} = \tilde{y}^{*}$, but if $y^{*} \in \mathcal{F} \backslash \tilde{\mathcal{F}}$ then $f_{0}(y^{*})$ must be greater then every point in $\tilde{\mathcal{F}}$. If not then there would be a point in $\tilde{\mathcal{F}}$ which is greater and we would be back to the first case.

Theorem \ref{T1} states that whenever we have a convex maximization program and we add additional constraints, the original solution will always be an upper bound to the new one. Thus, by relating different programs in this way we can easily derive inequalities between them. This method allows for many of the inequalities involving EOP to be proven using our flow construction \eqref{floweop}.

\subsection{Conditional EOP}

Before continuing we wish to mention a straightforward generalization of the EOP. A proposal for holographic conditional EOP was defined in \cite{Bao2018,Bao:2018aa} as the minimum cross section of the bulk region $r(ABC)\backslash r(C)$. This is easily accounted for using bit threads by a modification of the boundary condition.
\begin{equation}
E_{w}(A:B|C) \coloneqq \max_{v^{\mu}} \int_{A} \sqrt{h} n_{\mu}v^{\mu} \quad s.t. \quad n_{\mu}v^{\mu} |_{\mathcal{O}[AB|C]} = 0
\end{equation}
where $\mathcal{O}[AB|C]\coloneqq \partial(r(ABC)\backslash r(C))\backslash A \cup B$. $E_{w}(A:B|C)$ satisfies a modified version of each of the inequalities that $ E_{w}(A:B)$ satisfies. The proofs remain essentially unchanged.
\subsection{Bounds}
We now proceed by deriving the inequalities that the EOP satisfies using our flow formulation for the holographic proposal. This is accomplished by noting that restricting to a smaller submanifold (i.e. decreasing the volume of the homology region) or adding topological obstructions, such as black holes, can only reduce the space threads can occupy. These restrictions are implemented by imposing additional constraints on the flow which increases the region of relative homology. Because of Theorem \ref{T1}, imposing a constraint can never increase the maximal flow which means the unconstrained problem is an upper bound. 

In contrast lower bounds can be found by explicit construction. If a particular thread configuration is always allowed when maximizing (in the feasible set of the convex program) then the true optimal value of the program will always be at least this value.
\subsubsection{ $\frac{1}{2}I(A:B) \leq E_{w}(A:B) \leq \min[S(A),S(B)]$}

\paragraph{Upper bound}
$S(A)$ and $E_{w}(A:B)$ only differ as flow optimization problems by the addition of the constraint $n_{\mu}v^{\mu}|_{m(AB)} =0$. Furthermore, we note that the flow which gives $E_{w}(A:B)$ is in the feasible set of flows of $S(A)$. Thus by Theorem \ref{T1} $S(A) \geq E_{w}(A:B)$. The same argument is true for $S(B)$ therefore $E_{w}(A:B) \leq \min[S(A),S(B)]$.

\paragraph{Lower bound}

\begin{figure}[H]
\centering
\includegraphics[width=.4\textwidth]{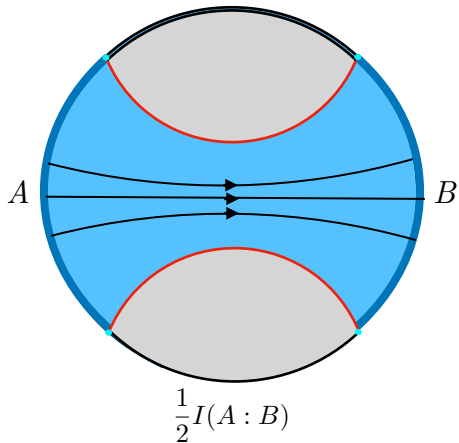}
\caption{\label{fig:mi}A flow which gives $\frac{1}{2}I(A:B)$ is suboptimal to the program \eqref{floweop}.}
\end{figure}

Following \cite{Freedman2017} , we can explicitly define the flow $\frac{1}{2}(v(A,B)-v(B,A))$ where $v(A,B)$ is the maximum flow through $A$ and $AB$. By construction $\frac{1}{2} I(A:B) = \int_{A}\frac{1}{2}(v(A,B)-v(B,A))$. This flow is non zero only in the homology region because both of the individual flows  $v(A,B)$ and $v(B,A)$ saturated on $m(AB)$. The result of taking their difference is therefore a flow which satisfies the condition $n_{\mu}v^{\mu}|_{m(AB)} =0$ and is in the feasible set of $E_{w}(A:B)$. Since it is an allowed flow when maximizing, $E_{w}(A:B)$ will always be at least this value therefore $\frac{1}{2}I(A:B) \leq E_{w}(A:B)$ (see fig.\ \ref{fig:mi}).

Together we have:
\begin{equation} \label{ineq1}
\frac{1}{2}I(A:B) \leq E_{w}(A:B) \leq \min[S(A),S(B)]
\end{equation}
\paragraph{Saturation}

\begin{figure}[hbt]
\centering
\includegraphics[width=.85\textwidth]{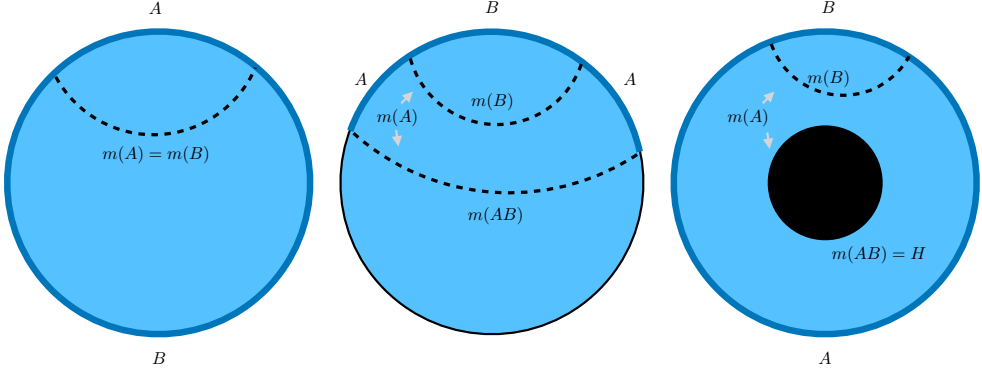}
\caption{\label{fig:AL}Examples of situations where AL is saturated.}
\end{figure}

The upper bound, $E_{w}(A:B) = \min[S(A),S(B)]$, will saturate if the Araki-Lieb (AL) inequality $S(A) \leq S(AB)+S(B)$ is saturated \cite{Nguyen2018} (see fig.\ \ref{fig:AL}). Without loss of generality let $S(B) \leq S(A)$. In the bulk AL will be saturated if and only if $m(A)=m(AB) \cup m(B)$ and $r(A)= r(AB) \backslash r(B)$ \cite{Headrick2014}.

In this case it is always possible find a flow that simultaneously maximizes the flux through $A,B,AB$ (which is an exceptional case of the nesting property of \cite{Freedman2017}). As a result, the part of this flow which goes to $B$ is the maximal flow and gives $E_{w}(A:B)$. That is when we impose $n_{\mu}v^{\mu}|_{m(AB)} =0$ none of the maximal flows which give $S(B)$ cross $m(AB)$ and thus imposing the constraint to get $E_{w}(A:B)$ cannot decrease the flow.

If we have a bipartite pure state both inequalities saturate as $S(AB) = 0$ and $m(A) = m(B)$. Trivially, $m(AB) = \emptyset$ and $ E_{w}(A:B) = S(A) =S(B)$. Therefore, $I(A:B) = 2S(A)$ and $\frac{1}{2}I(A:B) = E_{w}(A:B)$.

\subsubsection{$E_{w}(A:BC) \geq E_{w}(A:B)$}
We can define the two convex programs as:
\begin{equation}
\begin{split}
E_{w}(A:BC) &= \max_{v^{\mu}} \int_{A} \sqrt{h} n_{\mu}v^{\mu} \quad \text{s.t.} \quad n_{\mu}v^{\mu} |_{m(ABC)} = 0 \\
E_{w}(A:B) &= \max_{v^{\mu}} \int_{A} \sqrt{h} n_{\mu}v^{\mu} \quad \text{s.t.} \quad n_{\mu}v^{\mu} |_{m(ABC)} = 0, \; n_{\mu}v^{\mu} |_{m(AB)} = 0
\end{split}
\end{equation}

\begin{figure}[H]
\centering
\includegraphics[width=.4\textwidth]{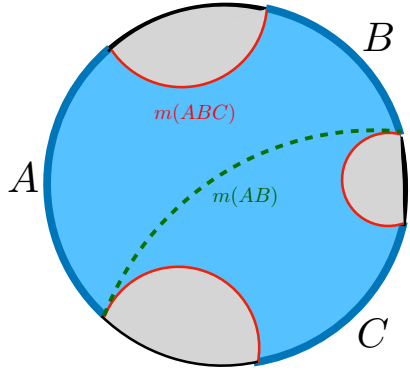}
\caption{\label{fig:3partybound}The reduction to $r(ABC)$. Imposing a second constraint reduces the manifold further to $r(AB)$.}
\end{figure}

Since we can get $E_{w}(A:B)$ by starting with $E_{w}(A:BC)$ and imposing an additional constraint and the maximal flow of $E_{w}(A:B)$ is an allowed flow of $E_{w}(A:BC)$, by Theorem \ref{T1} $E_{w}(A:B)$ can only be as big as $E_{w}(A:BC)$ (see fig.\ \ref{fig:3partybound})
\begin{equation} \label{ineq2}
E_{w}(A:BC) \geq E_{w}(A:B).
\end{equation}

\subsubsection{$E_{w}(A\tilde{A}:B\tilde{B}) \geq E_{w}(A:B) + E_{w}(\tilde{A}:\tilde{B})$}

\begin{figure}[hbt]
\centering
\includegraphics[width=.4\textwidth]{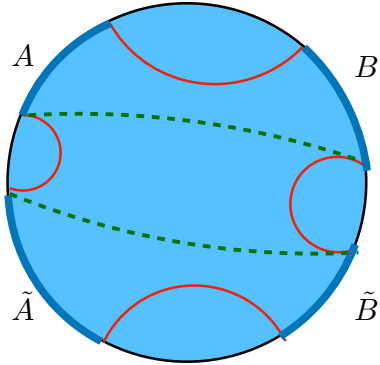}
\caption{\label{fig:4partybound}The green surfaces are the additional boundary constraints which must be imposed to separate $r(AB)$ and $r(\tilde{A}\tilde{B})$.}
\end{figure}

This superadditivity inequality always holds in the bulk, but will not hold on the boundary for a generic state. As a result it may be a criterion for determining whether a state has a holographic description similar to MMI \cite{Bhattacharyya2018}. The maximal flow which gives $E_{w}(A:B) + E_{w}(\tilde{A}:\tilde{B})$ is an allowed flow of the convex program of $E_{w}(A\tilde{A}:B\tilde{B})$, but $E_{w}(A:B) + E_{w}(\tilde{A}:\tilde{B})$ has the additional constraint $n_{\mu}v^{\mu} |_{\mathcal{O}'} =0$ which acts to separate $r(AB)$ and $r(\tilde{A}\tilde{B})$ (see fig.\ \ref{fig:4partybound}). As a result

\begin{equation} \label{ineq4}
E_{w}(A\tilde{A}:B\tilde{B}) \geq E_{w}(A:B) + E_{w}(\tilde{A}:\tilde{B})
\end{equation}
by Theorem \ref{T1}.
\paragraph{Saturation}
This inequality will saturate if and only if the boundary state is a product $\rho_{(A\tilde{A})(B\tilde{B})} = \rho_{AB} \otimes \rho_{\tilde{A}\tilde{B}}$. When this happens the mutual information $I(AB:\tilde{A}\tilde{B}) = 0$. In the bulk this means the two homology regions $r(AB)$ and $r(\tilde{A}\tilde{B})$ are disconnected \cite{Headrick2014} and implies there will be no maximal flow which has threads connecting $AB$ and $\tilde{A}\tilde{B}$. In this case no additional constraints are needed, $\mathcal{O}' = \emptyset$ and the two programs are equivalent sharing the same maximum.

\subsection*{Multiple Regions}
In order to proceed to cases where we deal with more than two boundary regions independently at a time we need to utilize some additional technology. Because now the flow will be connecting multiple regions we need a way to keep track of the threads extending between each region and ensure that the entire combination is still a valid flow. To this end we make use of the multiflow construction of \cite{Cui:2018aa} which accomplishes just this. Essentially, a multiflow is a collection of flows such that the linear combination of any flows in a region (referred to here as a subflow) is itself a flow. Those interested can find definitions and a flow proof of MMI in \ref{sec:mf}. 

In what follows, we will use multiflows to prove some remaining inequalities involving EOP. These inequalities involve more than two regions, but contain only quantities which are bipartite. As a consequence their proof requires only \eqref{ineq1} and MMI \eqref{MMI}. Since each of these inequalities has been proven using bit threads we cite the direct proofs using these properties.

\subsubsection{$E_{w}(A:BC) \geq \frac{1}{2}I(A:B) + \frac{1}{2}I(A:C)$}
By directly applying MMI \eqref{MMI} to \eqref{ineq1} we get
\begin{equation}
E_{w}(A:BC)\geq \frac{1}{2}I(A:BC) \geq \frac{1}{2}I(A:B) + \frac{1}{2}I(A:C).
\end{equation}
\subsubsection{If $ABC$ is pure $E_{w}(A:B) +E_{w}(A:C) \geq E_{w}(A:BC)$}
Consider a maximal multiflow $\mathcal{V}_{m}$ for a pure state of $\partial M = ABC$. By maximality we know the single interval subflows are saturated
\begin{equation}
S(A) = E_{w}(A:BC) = \int_{A} v_{A} \quad S(B) = \int_{B} v_{B} \quad S(C) = \int_{C} v_{C}.
\end{equation}
As a consequence of purity the two interval subflows are saturated as well and can be related to the single interval subflows
\begin{equation}
v_{AB} = -v_{C}, \quad v_{BC} = -v_{A}, \quad v_{AC} = -v_{B}.
\end{equation}
This implies the mutual information can be written simply as
\begin{equation}
I(A:B) = S(A) +S(B)-S(AB) = 2\int_{A}v_{AB}.
\end{equation}
Let $v'_{AB}$ be the maximal flow on the manifold $r(AB)$ whose flux gives $E_{w}(A:B)$ similarly define $v''_{AC}$ on $r(AC)$ for $E_{w}(A:C)$. Using the definition of the flows and the lower bound of \eqref{ineq1} we can rederive the inequality
\begin{equation}
\begin{split}
E_{w}(A:B) + E_{w}(A:C) = \int_{A}(v'_{AB}+v''_{AC}) \geq \frac{1}{2}(I(A:B) + I(A:C))= \\
\int_{A}(v_{AB}+v_{AC}) = \int_{A}v_{A} = E_{w}(A:BC).
\end{split}
\end{equation}

\section{Multipartite entanglement of purification}\label{sec3}

We will now generalize to the case of more than two boundary regions. The multipartite EOP was defined by \cite{Bao:2018aa,Umemoto2018} in the boundary CFT as
\begin{equation}\label{ITMEoP}
E_{p}(\mathcal{A}) = \min_{\ket{\psi}_{\mathcal{A}\mathcal{A}'}} \sum_{i} S(A_{i}A'_{i})
\end{equation}
where $\mathcal{A} = \{A_{i}\}$, $\mathcal{A}' = \{A'_{i}\}$, and the minimization is over purifications of $\rho_{\mathcal{A}}$.

\begin{figure}[H]
\centering
\includegraphics[width=.4\textwidth]{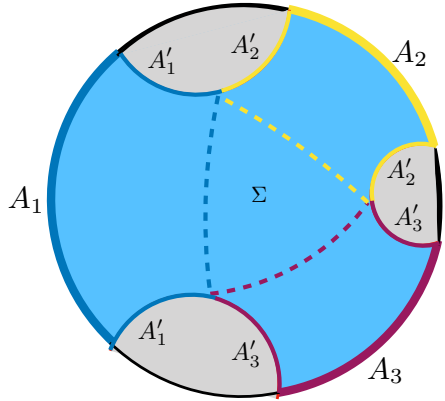}
\caption{\label{fig:meopp}The holographic multipartite EOP proposal of \cite{Umemoto2018}.}
\end{figure}

The holographic dual to \eqref{ITMEoP} was conjectured by \cite{Umemoto2018} to be given as the minimal closed surface among all possible partitions of the boundary region $\mathcal{O}$ into regions $A_{i}'$ each belonging to $A_{i}$\footnote{Historically, another surface proposal for the multipartite EOP was considered in \cite{Bao:2018aa} which allows for bulk intersection. Being an interesting class of of surfaces in the bulk we provide a dualization of these in \ref{sec:bulk}.} (see fig.\ \ref{fig:meopp})
\begin{equation}\label{umemotoprop}
E_{w}(\mathcal{A}) \coloneqq \min_{\mathcal{A}'} \sum_{i} \area(m(A_{i}A_{i}')) =\sum_{i} \area(\Sigma(A_{i}))  = \area(\Sigma)
\end{equation}
where $\Sigma \coloneqq \cup_{i}\Sigma(A_{i})$.

Before dualizing there is another geometric quantity we wish to mention for its importance in the interpretation of our construction. Consider a maximal multiflow which remains in the homology region $r(\mathcal{A})$:\footnote{In making this definition we have chosen to sum over all of the flows. This amounts to a different normalization than that of the bipartite case. This has been done for clarity and to remain notationally consistent with \cite{Bao:2018aa,Umemoto2018}.}
\begin{equation}
\area(b_{p}(\mathcal{A})) \coloneqq \max_{\mathcal{V}_{m}} \sum_{i,\; j}\int_{A_{i}} v_{ij} \text{ s.t. } n_{\mu}v_{ij}^{\mu}|_{m(\mathcal{A})} =0\,.
\end{equation}

\begin{figure}[H]
\centering
\includegraphics[width=.4\textwidth]{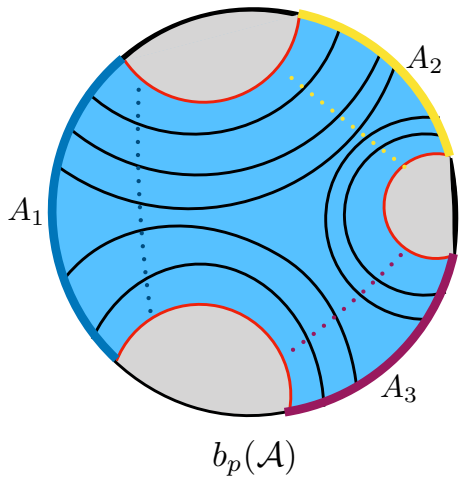}
\caption{\label{fig:bp}The sum of the area of the bipartite homology region cross sections for multiple regions is given by the flux of a maximal multiflow confined to $r(\mathcal{A})$. }
\end{figure}

\begin{figure}[H]
\centering
\includegraphics[width=.4\textwidth]{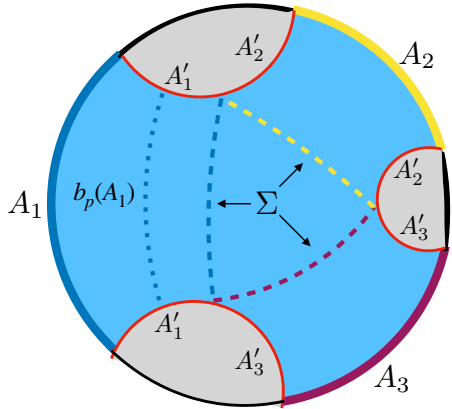}
\caption{\label{fig:sigmabpcolored2}The surface $b_{p}(\mathcal{A})$ is a geometric obstacle to the maximal flow. Additional flux is needed on $\mathcal{O}$ in order to get flows which saturate on $\Sigma$.}
\end{figure}

As a result of Theorem \ref{existmultiflow} the flux of this flow calculates precisely the sum of the bipartite homology region cross sections which is the minimal surface dual to the maximal multiflow on $r(\mathcal{A})$
\begin{equation}
\area(b_{p}(\mathcal{A}))= \sum_{i}E_{w}(A_{i} : \cup_{i \neq j}A_{j}).
\end{equation}
This surface is the natural geometric obstruction to flows in the bulk, but by definition is purely bipartite (see fig.\ \ref{fig:bp}). We will find that any formulation of multipartite entanglement must inherently find a way to circumvent this obstacle (see fig.\ \ref{fig:sigmabpcolored2}).\footnote{The surface $b_p(A_{i})$ is the minimal surface homologous to $A_{i}$ relative to $\mathcal{O}$. That is we can define a part of $\mathcal{O}$ called $A'_{i}$ which ``belongs" to $A_{i}$. Finding the surface $\Sigma(A_{i})$ is similar except for the additional condition that after splitting $\mathcal{O}$ among all of $\mathcal{A}$ there can be nothing left, everything must be assigned. This is more restrictive in that the resulting surface must be closed}

\subsection{Derivation}
\subsubsection{Convex and integer relaxation}
We will now derive our flow dual to \eqref{umemotoprop} using convex dualization. For clarity we will explicitly go through all the steps. In order to dualize we must first write this optimization problem as a convex program. Starting with \eqref{umemotoprop} we perform a convex relaxation by introducing scalar fields $\{\psi_{i}\}$  which smear the surface to form level sets in the bulk. For each $\psi_{i}$ we must impose boundary conditions based upon where the surface ends. We arrive at the relaxed program
\begin{equation}\label{mprelaxed}
\min_{\{\psi_{i}\}} \sum_{i}\int_{M}\sqrt{g} |\nabla_{\mu}\psi_{i}| \quad \text{s.t.} \quad \psi_{i}|_{A_{j}}=\delta_{ij}, \quad \sum_{i}\psi_{i}|_{\mathcal{O}} = 1 \quad \text{On $\mathcal{O}$: }\psi_{i}  \in \{0,1\}.
\end{equation}
Here the constraints on $\mathcal{O}$ force a division:  a piece of $\mathcal{O}$ must belong to only one $A_{i}$. The problem with the program as written is that it is an integer program and as a result is not convex. In order to make use of the tools of convex optimization it is necessary to perform an additional relaxation to the program. This is done by increasing the domain of the $\psi_{i}$'s to be over the reals\footnote{Such a relaxation of variables from being integer valued is very common in  the field of integer programming. Notably, if one has a linear objective and the constraints are totally unimodular then this relaxation will not introduce a so called integrality gap: the two program will have the same optimal configuration. This is the content of the Hoffman-Kruskal Theorem. A totally unimodular matrix is one which every square submatrix has determinant $-1,0,1$. Interestingly, our constraints though infinite in number have this property. Sadly, the absence of an integrality gap can not be guaranteed in the case of a convex objective.} resulting in a further relaxed program which is convex:
\begin{equation}\label{mprelaxedagain}
\min_{\{\psi_{i}\}} \sum_{i}\int_{M}\sqrt{g} |\nabla_{\mu}\psi_{i}| \quad \text{s.t.} \quad \left.\psi_{i}\right|_{A_{j}}=\delta_{ij}, \quad \left.\sum_{i}\psi_{i}\right|_{\mathcal{O}} = 1\,
\end{equation}

The consequence of this relaxation is an increase in the space of surfaces which are considered when minimizing. Namely, the relaxation now allows surface where regions of $\mathcal{O}$ are shared in a constrained way between the different regions $\mathcal{A}$. Generically, the two programs \eqref{mprelaxed} and \eqref{mprelaxedagain} are not the same and may have different minima. In fact we can find explicit examples where the two programs disagree. We do not know a complete set of sufficient and necessary conditions to guarantee agreement, but it is true that in many cases of interest we find that the original surface proposal and our relaxed program give the same result. This is essentially a consequence of convexity of the relevant surfaces which results from the negative curvature of the manifold. As we continue we will assume we are working in such cases when the two programs \eqref{mprelaxed} and \eqref{mprelaxedagain} are in fact equivalent and relegate further details and discussion of the integer relaxation to \ref{sec:relax}.

We wish to comment however that the relaxed program with its increased feasible space of surfaces represents an interesting class of geometric surfaces in the bulk. It is entirely possible that this generalization captures multipartite EOP as a separate proposal (which agrees with the current one in certain cases) or could represent another class of surfaces with a potential dual information theoretic interpretation.

\subsubsection{Dualization}
\label{sec:dualization}

We proceed with the dualization of \eqref{mprelaxedagain} by imposing each constraint with a Lagrange multiplier and find the Lagrangian:
\begin{equation}
\begin{split}
L(\{\psi\},\{\phi\},\{v^{\mu}\},\{\gamma\},\alpha) &= \sum_{i}\int_{M}\sqrt{g}\left[ |\phi_{\mu i}| +v^{\mu}_{i}(\phi_{\mu i}-\nabla_{\mu}\psi_{i})\right] \\
&+\int_{A_{i}}\sqrt{h}\left[\gamma_{ii}(\psi_{i}-1) + \sum_{i \neq j}\gamma_{ij}\psi_{j}\right] + \int_{\mathcal{O}} \sqrt{h}(-\alpha)(\psi_{i} -1).
\end{split}
\end{equation}
For convenience we define the covectors $\phi_{\mu i} = \nabla_{\mu}\psi_{i}$ the equivalence of which is imposed by the Lagrange multipliers $v^{\mu}_{i}$. These vector fields will be the dual variables which are our flows. We first dualize with respect to $\{\phi\}$ and get the standard bulk flow norm bound constraints:
\begin{equation} \phi_{\mu i}v^{\mu}_{i} + |\phi_{\mu i}| =0, \quad |v^{\mu}_{i}| \leq 1
\end{equation}
Utilizing these and performing integration by parts the Lagrangian becomes
\begin{equation}
\begin{split}
\sum_{i}\int_{M}\sqrt{g}\psi_{i}\nabla_{\mu}v^{\mu}_{i} +&\int_{A_{i}}\sqrt{h}\left[\gamma_{ii}(\psi_{i}-1) + \sum_{i \neq j}\gamma_{ij}\psi_{j} + \sum_{j} n_{\mu}v^{\mu}_{j}\psi_{j}\right] \\
+ &\int_{\mathcal{O}} \sqrt{h}\left[-\alpha(\psi_{i} -1) + \sum_{j} n_{\mu}v^{\mu}_{j}\psi_{j} \right].
\end{split}
\end{equation}
Finally we dualize with respect to the remaining primal variables $\{\psi\}$
\begin{equation}
\begin{split}
&\text{On $M$: } \nabla_{\mu}v^{\mu}_{i} =0\\
&\text{On $A_{i}$: } \gamma_{ii} + n_{\mu}v^{\mu}_{i} =0, \quad \gamma_{ij} + n_{\mu}v^{\mu}_{j} =0 \\
&\text{On $\mathcal{O}$: } -\alpha + n_{\mu}v^{\mu}_{i} =0. \\
\end{split}
\end{equation}
This gives the dual program which requires us to maximize over a set of flows $\mathcal{V}$ and the function $\alpha$ on $\mathcal{O}$:
\begin{equation}\label{meopv}
\begin{split}
E_{w}(\mathcal{A}) = \max_{\mathcal{V},\alpha} \left(\sum_{i}\int_{A_{i}}\sqrt{h} n_{\mu}v^{\mu}_{i} + \int_{\mathcal{O}}\sqrt{h}\alpha\right) \quad \text{s.t.} \\ \nabla_{\mu}v_{i}^{\mu} = 0, \; |v_{i}| \leq 1, \; \left.n_{\mu}v^{\mu}_{i}\right|_{\mathcal{O}} = \alpha\,.
\end{split}
\end{equation}
This flow program for the holographic multipartite EOP proposal is our main result.

\begin{figure}[H]
\begin{tabular}{ccc}
\includegraphics[width=.3\textwidth]{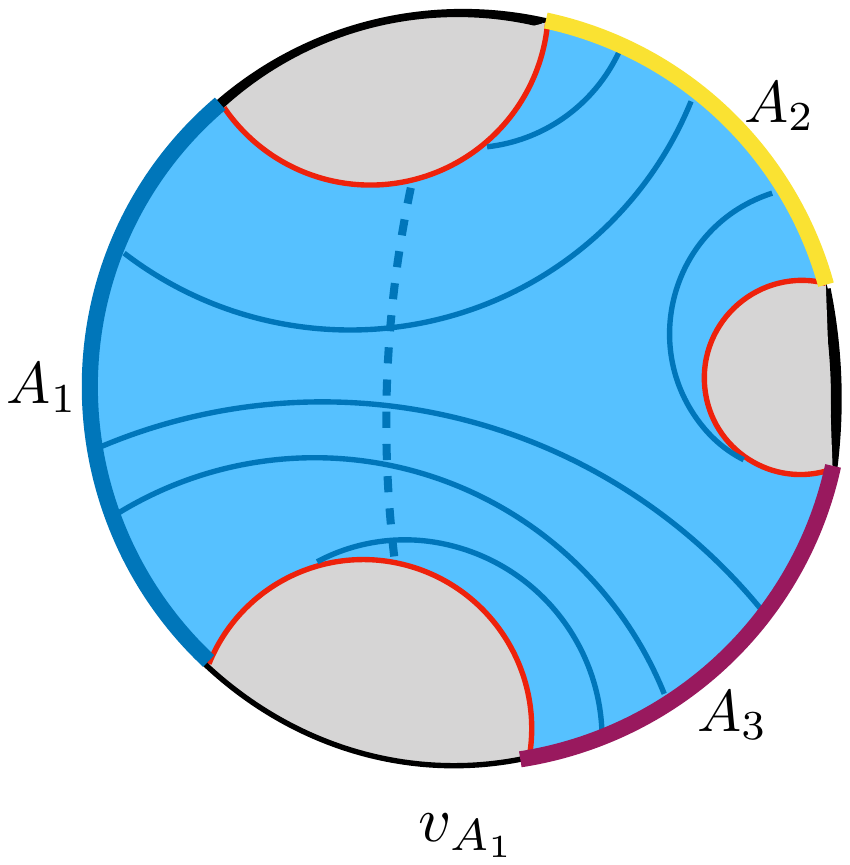} & \includegraphics[width=.3\textwidth]{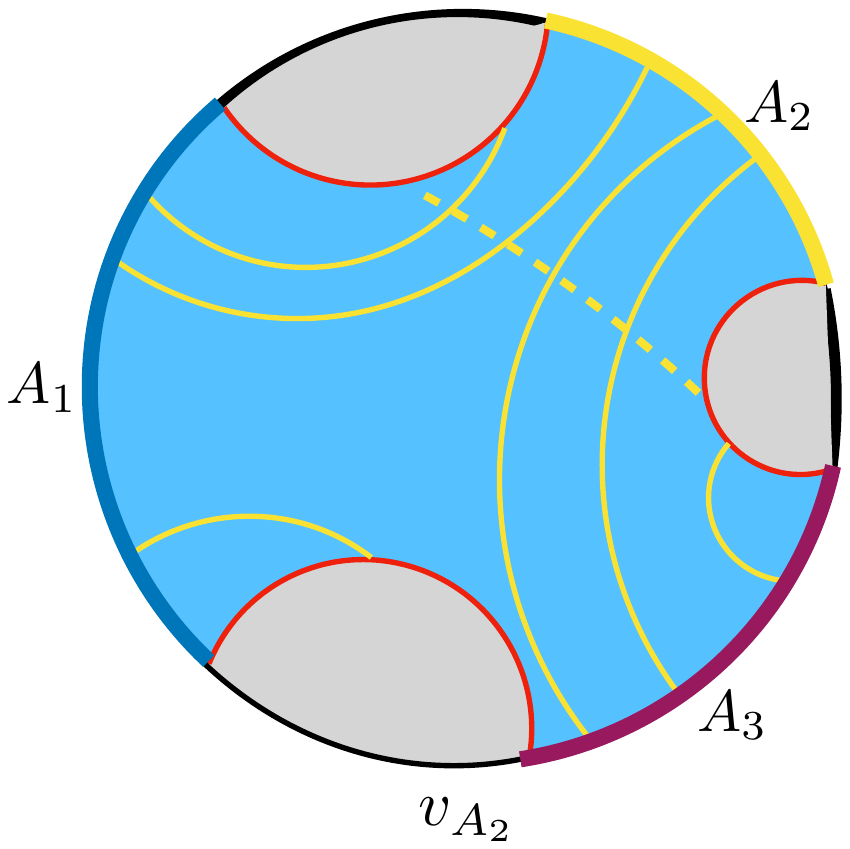} & \includegraphics[width=.3\textwidth]{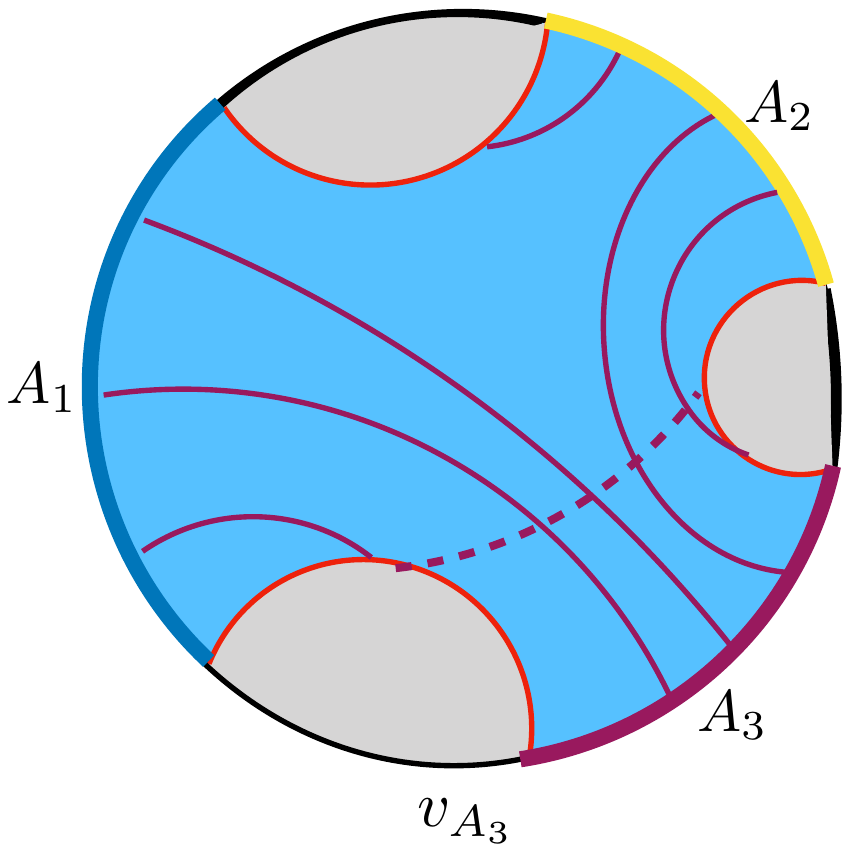}\\
\end{tabular}
\caption{\label{fig:meopexample}An example of a maximal flow configuration $\mathcal{V}$ to \eqref{meopv} for three regions. Note the local flux on $\mathcal{O}$ which we call $\alpha$ is the same for all three of the flows. In the example each surface $b_{p}(A_{i})$ can support three threads while each surface $\Sigma(A_{i})$ can support four.}
\end{figure}

The program \eqref{meopv} involves several flows, one for each boundary region $A_i$. These flows interact with each other only through the constraint $n_\mu v_i^\mu|_{\mathcal{O}}=\alpha$. The flow lines can be thought of as different-colored threads, or as living on different copies of the manifold that are glued together along $\mathcal{O}$.\footnote{This perspective has tantalizing visual overlap and possible connections to notions of wormhole geometries \cite{Dutta:aa,Bao:2018ac,Bao:2018ab}.}  As an example of an optimal configuration to provide intuition see fig.\ \ref{fig:meopexample}.

An important feature of \eqref{meopv} is that at no point were we required to define a partition of $\mathcal{O}.$\footnote{Since $\mathcal{O} = m(\mathcal{A})$ is the minimal surface homologous to $\mathcal{A}^{c}$ any threads on $\mathcal{O}$ can always nonuniquely be extended back to $\mathcal{A}^{c}$. In this sense one could view the flows as elements of the full manifold $M$. Moving forward we will not adopt this viewpoint.} That is the threads find the correct location of $\Sigma$ (where they saturate) naturally as a result of the maximization. This is in contrast to the cut side where we were required to explicitly minimize over both the area of surfaces and choice of partition. In a sense the lack of knowledge of the optimal partition is what makes this problem hard. Given the optimal partition $E_{w}(\mathcal{A})$ can be calculated trivially using standard flows which calculate entanglement entropies. This however hides the rich structure and multipartite nature which we will demonstrate.

For the reader familiar with the notion of a \emph{multiflow} \cite{Cui:2018aa}, we emphasize that these flows do \emph{not} define a multiflow since they do not necessarily obey a \emph{joint} norm bound. (See appendix \ref{sec:mf} for a review and discussion of multiflows.)

\subsection{Flow decomposition}
In order to interpret the maximal flows of our program \eqref{meopv}, we find it convenient to make some restrictions on the space of solutions. To do so we place additional constraints on the flow program which act to steer us towards a particular class of solutions. These solutions have the property of isolating the bipartite contribution from an additional term which is truly multipartite. This separation leads to a drastic simplification. 

Let $\mathcal{S} \subset \mathcal{F}$ be the space of maximal flow configurations, which is a subset of all of the feasible points of the convex program. We now define the subspace $\mathcal{S}'\subset \mathcal{S}$ with the following constraints:
\begin{outline}
\1 The flux on the boundary $\mathcal{O}$ is everywhere positive: $\alpha = n_{\mu}v^{\mu}_{i}|_{\mathcal{O}}\geq 0$.
\1 The total flux on the boundary $\mathcal{O}$ is as small as possible.
\1 The flow configuration contains no threads which contribute zero to the objective. This includes loops and threads which connect regions where the flux is not being counted.
\end{outline}

The last of these constraints is simply implemented for convenience and can always be done. It allows us to discuss ``simple" solutions which only contain threads which count towards the objective. Given a maximal thread configuration which contains such spurious threads we define a new solution by deleting them.

The result of this appropriate gauging of the solution is that the holographic multipartite EOP proposal has a natural decomposition into the number of threads connecting the regions $A_i$ to each other, which equals the total bipartite homology region cross section $\sum\area(b_p(A_i))$, and an additional term counting the number of threads on $\mathcal{O}$, which is fundamentally multipartite. This leads naturally to the conjecture that the actual multipartite EOP may exhibit a similar decomposition:
\begin{equation}
E_{p}(\mathcal{A}) = \sum_{i}E_{p}(A_{i} : \cup_{i \neq j}A_{j}) + M(\mathcal{A})
\end{equation}
where $M(\mathcal{A})$ is the multipartite contribution.

Our task is to implement each of the above constraints in the convex maximization program and understand how they respond to dualization. This will allow us to define a modified dual which implements the restriction to $\mathcal{S}'$. We will argue that each of these constraints does not affect the optimal value of our program.

\subsubsection{Forcing $\alpha\ge0$}

The first restriction we would like to add to the flow program \eqref{meopv} is $\alpha\ge0$:
\begin{equation}\label{meopv2}
\begin{split}
\tilde E_w(\mathcal{A})=\max_{\mathcal{V},\alpha} \left(\sum_{i}\int_{A_{i}}\sqrt{h}\, n_{\mu}v^{\mu}_{i} + \int_{\mathcal{O}}\sqrt{h}\,\alpha\right) \quad \text{s.t.} \\ \nabla_{\mu}v_{i}^{\mu} = 0\,, \; |v_{i}| \leq 1\,, \; \left.n_{\mu}v^{\mu}_{i}\right|_{\mathcal{O}} = \alpha\,,\;\alpha\ge0\,.
\end{split}
\end{equation}
In the dualization presented in subsection \ref{sec:dualization} above, $\alpha$ arose as the Lagrange multiplier for the equality constraint $\sum\psi_i|_{\mathcal{O}}=1$. According to the rules of convex dualization, an \emph{inequality} constraint is enforced by a \emph{non-negative} Lagrange multiplier, so changing this to an inequality constraint $\sum\psi_i|_{\mathcal{O}}\ge1$ simply leads to the additional inequality constraint $\alpha\ge0$ on the dual, which is just what we want. Thus the dual of \eqref{meopv2} is as follows:
\begin{equation}\label{mprelaxedagain2}
\tilde E_w(\mathcal{A})=\min_{\{\psi_{i}\}} \sum_{i}\int_{M}\sqrt{g} |\nabla_{\mu}\psi_{i}| \quad \text{s.t.} \quad \left.\psi_{i}\right|_{A_{j}}=\delta_{ij}, \quad \left.\sum_{i}\psi_{i}\right|_{\mathcal{O}} \ge 1\,.
\end{equation}

We will now argue that these two programs are equivalent to \eqref{meopv} and \eqref{mprelaxedagain}, i.e.\ that the new constraint $\alpha\ge0$ and the new relaxation $\sum\psi_i|_{\mathcal{O}}\ge1$ have no affect on the optimal value of these programs, and thus $\tilde E_w(\mathcal{A})=E_w(\mathcal{A})$.

We have two arguments for this equivalence. First, as argued in appendix \ref{sec:relax}, we don't expect that in the solution to \eqref{mprelaxedagain} the regions where the different $\psi_i$s are non-zero will overlap (at least in geometries relevant for AdS/CFT dualities), and the same argument applies to \eqref{mprelaxedagain2}. In this case, we can show that the solution must have $\psi_i\le1$ everywhere: if $\psi_i>1$ somewhere, setting it to 1 reduces the value of the objective while maintaining the boundary conditions. Therefore $\sum\psi_i\le1$ everywhere, and in particular on $\mathcal{O}$ the sum equals 1.

The second argument applies complementary slackness to the new inequality constraint. This says that, if at a given point $x\in\mathcal{O}$, there exists an optimal flow configuration with $\alpha(x)>0$, then for \emph{all} optimal configurations of \eqref{mprelaxedagain2}, $\sum\psi_i(x)=1$. Given the well-known non-uniqueness of max flows, it would be surprising if there was no way to have a non-zero flux at some given point on $\mathcal{O}$.

\subsubsection{Minimizing flux on $\mathcal{O}$}

In the program \eqref{meopv2}, the flow $v_i$ has non-negative flux on $\mathcal{O}$. Therefore, in any feasible configuration, its flux on $A_i$ is bounded above by the bipartite homology-region cross section:
\begin{equation}\label{s'upper}
\int_{A_{i}}\sqrt h\,n_\mu v^\mu_{i} \leq \area(b_p(A_{i}))\,.
\end{equation}
We will now show that the bounds \eqref{s'upper} are collectively tight: there exist optimal configurations in which they are simultaneously saturated for all $i$.

In the space $\mathcal{S}$ of solutions of \eqref{meopv2}, the different terms in the objective do not necessarily take fixed values: a given term can take different values on different solutions, although of course the total value of the objective is the same in all the solutions. We are interested in solutions in which the first term is as large as possible and the second term (the flux on $\mathcal{O}$) as small as possible. To steer the solution in that direction, we add to the objective a term $\epsilon\sum\int_{A_i}v_i$, where $\epsilon$ is small and positive, yielding the following
modified program:
\begin{equation}\label{meopv3}
\begin{split}
\tilde E^\epsilon_w(\mathcal{A})=\max_{\mathcal{V},\alpha} \left((1+\epsilon)\sum_{i}\int_{A_{i}}\sqrt{h}\, n_{\mu}v^{\mu}_{i} + \int_{\mathcal{O}}\sqrt{h}\,\alpha\right) \quad \text{s.t.} \\ \nabla_{\mu}v_{i}^{\mu} = 0\,, \; |v_{i}| \leq 1\,, \; \left.n_{\mu}v^{\mu}_{i}\right|_{\mathcal{O}} = \alpha\,,\;\alpha\ge0\,.
\end{split}
\end{equation}
In general, for concave functions $f$ and $g$,
\begin{equation}
\max(f+\epsilon g) = \max f +\epsilon\max_{\mathcal{S}}g
\end{equation}
(to first order in $\epsilon$), where $\mathcal{S}$ is the set of maximizers of $f$. In this case, we have
\begin{equation}
\tilde E^\epsilon_w(\mathcal{A})=\tilde E_w(\mathcal{A})+\epsilon\max_{\mathcal{S}}\sum_{i}\int_{A_{i}}\sqrt{h}\, n_{\mu}v^{\mu}_{i}\,.
\end{equation}

We will show that the optimal value is bounded below as follows:
\begin{equation}\label{s'lower}
\tilde E_w^\epsilon(\mathcal{A}) \ge \tilde E_w(\mathcal{A}) + \epsilon\sum_i\area(b_p(A_i))\,,
\end{equation}
so
\begin{equation}\label{s'lower2}
\max_{\mathcal{S}}\sum_{i}\int_{A_{i}}\sqrt{h}\, n_{\mu}v^{\mu}_{i}\ge\sum_i\area(b_p(A_i))\,.
\end{equation}
Given the inequalities \eqref{s'upper}, this can only hold if there exist solutions to \eqref{meopv2} in which all of those inequalities are saturated (and thus \eqref{s'lower2} is also saturated).

To show \eqref{s'lower}, we first note that, by a straightforward dualization, \eqref{meopv3} is equivalent to a version of \eqref{mprelaxedagain2} with a modified boundary condition:
\begin{equation}\label{restrictedcut}
\tilde E^\epsilon_w(\mathcal{A})= \min_{\{\psi_{i}\}} \sum_{i}\int_{M}\sqrt{g}\, |\nabla_{\mu}\psi_{i}| \quad \text{s.t.} \quad \left.\psi_{i}\right|_{A_{j}}=(1+\epsilon)\delta_{ij}\,, \quad \left.\sum_{i}\psi_{i}\right|_{\mathcal{O}} \geq 1\,.
\end{equation}
Consider now  a feasible configuration of \eqref{restrictedcut}. We can turn it into a feasible configuration for \eqref{mprelaxedagain2} as follows: Wherever $\psi_i<0$, set it to 0, and wherever $\psi_i>1$, set it to 1. Since it is a feasible configuration for \eqref{mprelaxedagain2}, the objective has value no less than $\tilde E_w(\mathcal{A})$. This operation reduces the objective because it removes level sets. (Recall that, by the co-area formula, the term $\int|\nabla\psi_i|$ in the objective equals the area of the level set $\{x:\psi_i(x)=p\}$, integrated over $p$.) In particular, the level sets for $1 < \psi_i < 1+\epsilon$ are homologous to $A_i$ relative to $\mathcal{O}$, and therefore have area at least $b_p(A_i)$. So the reduction in the objective is at least
\begin{equation}
\epsilon\sum_i\area(b_p(A_i))\,,
\end{equation}
which establishes the inequality \eqref{s'lower}.

We will denote the set of solutions to \eqref{meopv3} that saturate \eqref{s'upper} $\mathcal{S}'$. These decompose the holographic multipartite EOP into bipartite and ``truly multipartite'' contributions.\footnote{Because of the non-uniqueness of thread configurations there is a freedom to convert some of the bipartite threads into threads on $\mathcal{O}$ which we would here refer to as multipartite contributions. Such thread configurations are not in $\mathcal{S'}$. Taking this idea seriously there is some freedom due to space in the geometry to convert between bipartite and multipartite contributions. This freedom could have an interesting information theoretic interpretation in the boundary CFT, but we leave such exploration to future work. Importantly, by restricting to $\mathcal{S}'$ we are maximizing the number of bipartite threads. It is in this sense that the remainder of the flux, which must be on $\mathcal{O}$, \emph{must} be ``truly multipartite".} From now on we will assume that all of the solutions of \eqref{meopv} we consider are in $\mathcal{S}'$ unless otherwise stated.

\subsection{$\mathcal{V}$ flows}

\begin{figure}[H]
\centering
\includegraphics[width=.5\textwidth]{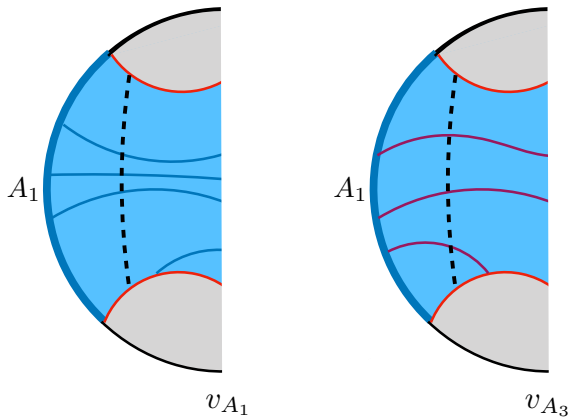}
\caption{\label{fig:voneregion}A portion of the manifold from $A_{1}$. In the example $b_{p}(A_{1})$ can support 3 threads. Left: The flow $v_{A_{1}}$. Right: The incoming flow from another region is limited by the required flux on $\mathcal{O}$.}
\end{figure}

To understand the behavior of $\mathcal{V}$ consider one of the flows $v_{A_{i}}$. It is always possible to send enough flow from $A_{i}$ through $b_{p}(A_{i})$ such that the flow saturates on $b_p(A_{i})$. This is clear from the multiflow construction of $b_{p}(\mathcal{A})$. For the flow to also saturate on $\Sigma(A_{i})$ there must be additional flux from the portion of $\mathcal{O}$ between $b_p(A_{i})$ and $\Sigma(A_{i})$ of the amount $\area(\Sigma(A_{i})) - \area(b_p(A_{i}))$. However, because of the condition $\alpha =n_{\mu}v^{\mu}_{i}|_{\mathcal{O}} = n_{\mu}v^{\mu}_{j}|_{\mathcal{O}}$ this means the same amount of flux must be present for every other flow. These threads must go back to one of the boundary regions, for concreteness say $A_{i}$, and as a result cross $b_p(A_{i})$. Note that doing so limits the amount of flux that the other flows can send back to $A_{i}$ (see fig\ \ref{fig:voneregion}). It is through this mechanism of balancing the amount of flux that leaves $\mathcal{O}$ conditioned on maximizing the flux each region can send that the flows find the correct location of $\Sigma$ and value of $E_{w}(\mathcal{A})$. For illustration, a numerical example is presented in fig.\ \ref{fig:flowexample}.

\begin{figure}[H]
\centering
\begin{tabular}{cc}
\includegraphics[width=.4\textwidth]{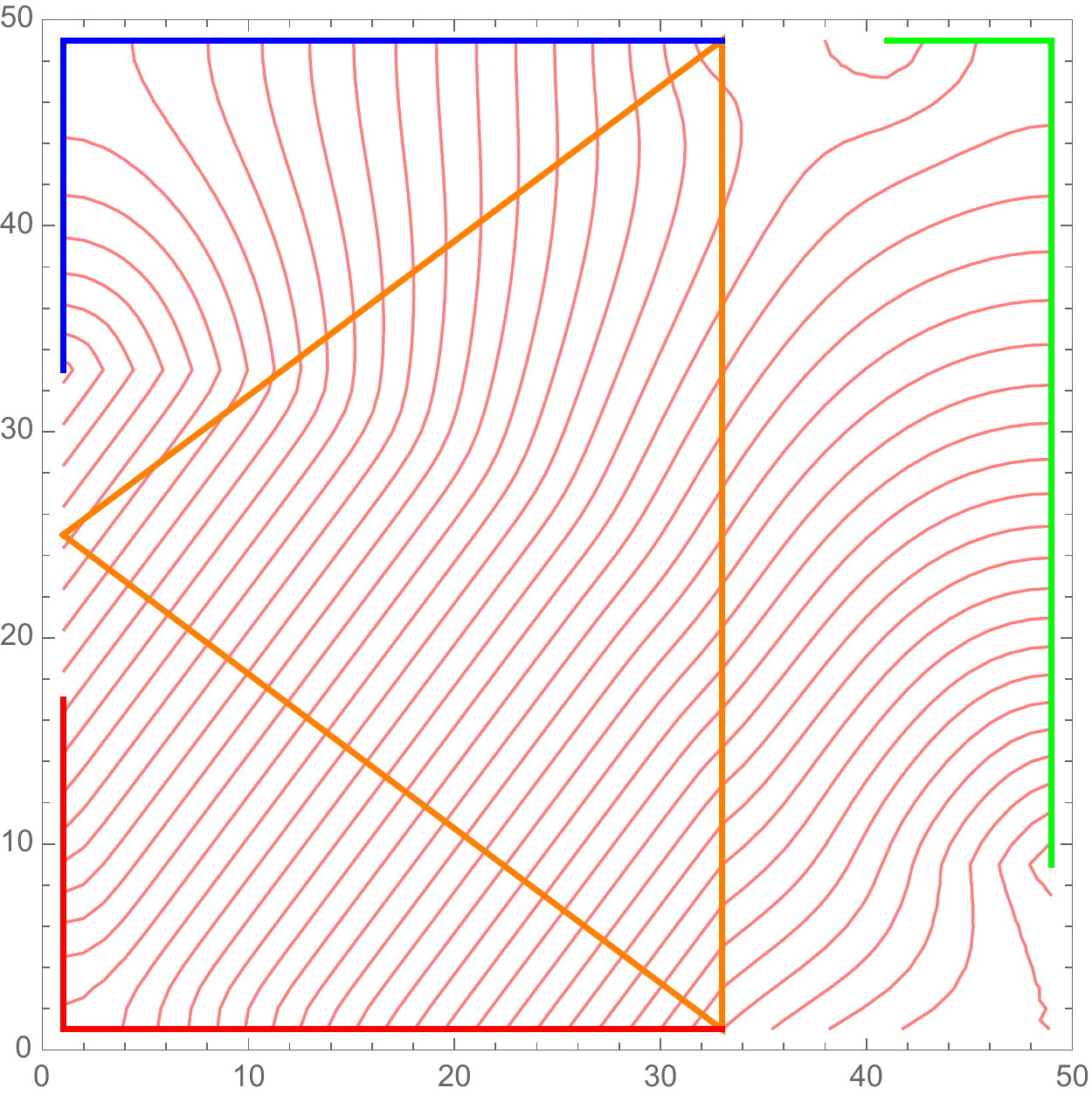} & \includegraphics[width=.4\textwidth]{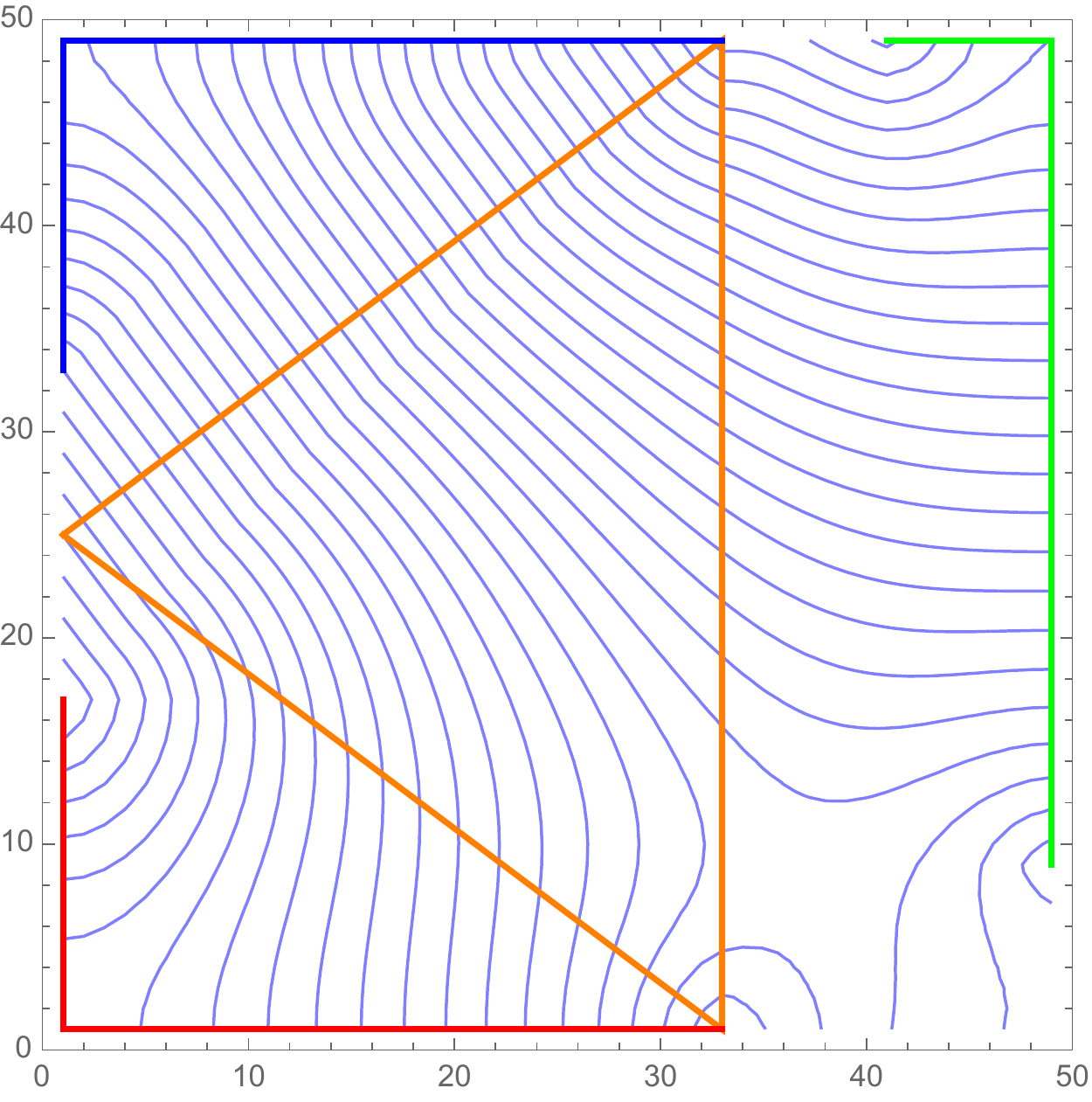} \\
$v_{A}$ & $v_{B}$ \\[6pt]
\multicolumn{2}{c}{\includegraphics[width=.4\textwidth]{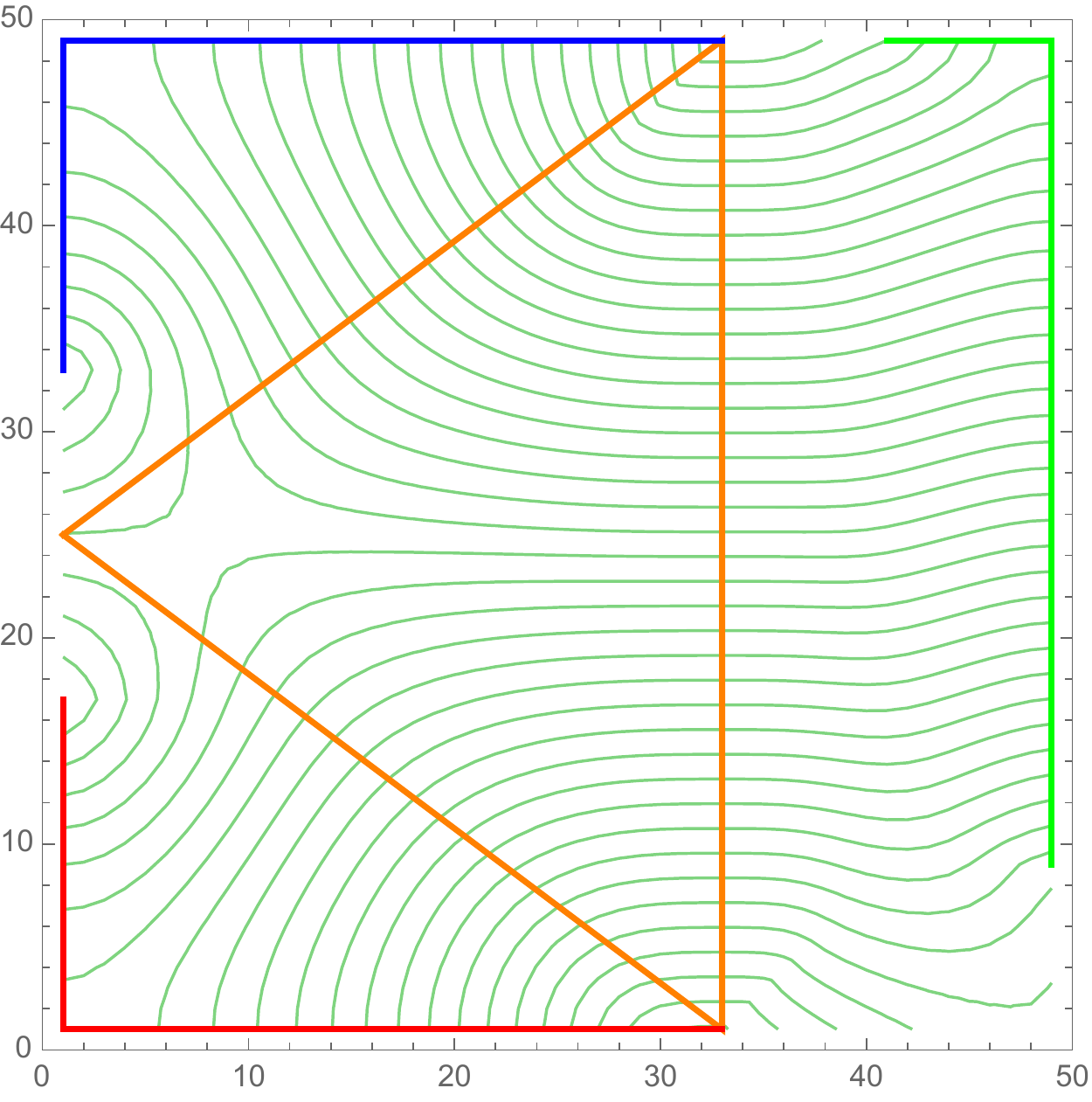} }\\
\multicolumn{2}{c}{$v_{C}$}
\end{tabular}
\caption{\label{fig:flowexample}An example of maximal flows $\mathcal{V}$ for a choice of boundary regions $A,B,C$ on a flat square. The region $A$ is in red, $B$ in blue, and $C$ in green. The minimal surface $\Sigma$ is given by the orange triangle. The flows was calculated by latticizing the manifold and using \emph{Mathematica}'s built in \texttt{FindMaximum} function.}
\end{figure}

\subsection{$\mathcal{U}$ and $w$ vector fields}

Consider a change of variables by defining new vector fields
\begin{equation}
w^{\mu} = \frac{1}{n}\sum_{i}v_{i}^{\mu}, \quad u^{\mu}_{i} = v^{\mu}_{i}-w^{\mu}.
\end{equation}
The program \eqref{meopv} becomes\footnote{The condition $\sum_{i}u^{\mu}_{i}=0$ is not necessary. Consider a set of maximal vector fields such that $\sum_{i}u^{\mu}_{i}\neq0$. Letting $\frac{1}{n}\sum_{i}u_{i}=x$ we can create a new set of fields $u'_{i}=u_{i}-x$ and $w'=w+x$ which are maximal, but satisfy $\sum_{i}u'^{\mu}_{i}=0$.}
\begin{equation}\label{meopuw}
\max_{\mathcal{U},w} \; \sum_{i}\int_{A_{i}} \sqrt{h}n_{\mu}u^{\mu}_{i} \; \text{s.t. } \nabla_{\mu}u^{\mu}_{i} = 0, \; \nabla_{\mu}w^{\mu} = 0, \; |u_{i}+w| \leq 1, \; n_{\mu}u^{\mu}_{i}|_{\mathcal{O}} =0 , \; \sum_{i}u^{\mu}_{i}=0.
\end{equation}
This redefinition comes with a trade off: at the cost of introducing a new vector field $w$ which does not contribute to the objective and modifying the norm bounds we have the benefit that $u_{i}$ cannot end on $\mathcal{O}$.
It is important to note it is not possible to generically set $w=0$. As we will see this is a distinguishing feature from the bipartite case. As a result we do not have the constraint $|u_{i}| \leq 1$, but rather $|u_{i}+w| \leq 1$, that is the constitutes of $\mathcal{U}$ are \emph{not} flows. For this reason we explicitly and pedantically refer to $(\mathcal{U},w)$ as vector fields.

In order for $E_{w}(\mathcal{A})$ to have a larger maximum than $b_{p}(\mathcal{A})$ it must be possible to transport additional flux through the manifold: this is exactly the purpose of $w$. Because of the norm bound $|u_{i}+w| \leq 1$, $w$ has the property of modifying the flux that $u_{i}$ can transport. Whenever $u_{i}$ and $w$ are anti-aligned it is possible to send twice as much flux than would normally be possible, but if they are aligned $u_{i}$ must locally be zero . As such the behavior of $w$ is to originate from $\mathcal{O}$ and travel out to $\mathcal{A}$. This way each $u_{i}$ does not encounter the bottle neck at $b_{p }(A_{i})$. In doing so, however every other vector field $u_{j_\neq i}$ can send less flux through to $A_{i}$. This naturally limits the maximum flux of $w$ out to $\mathcal{A}$.

One way to interpret this is to imagine we have chosen a configuration for $w$. Then each $u_{i}$ will experience a different effective geometry $g_{\mu\nu}^{i}$. When $u_{i}$ goes against $w$ there is twice as much space, but when $u_{i}$ tries to flow with $w$ there is none. In the effective geometry each $u_{i}$ can be renormalized to be a flow, but it is impossible to do this for the vector fields as a whole.

An interesting feature of this framework is that all of the multipartite features have been isolated to $w$, that is one vector field gives all the information needed to understand the deviation from the area of the bipartite surfaces. This can be seen in the numerical example fig.\ \ref{fig:flowexample2} which is fig.\ \ref{fig:flowexample} shown in terms of the $u$ and $w$ vector fields.

\begin{figure}[H]
\centering
\begin{tabular}{cc}
\includegraphics[width=.4\textwidth]{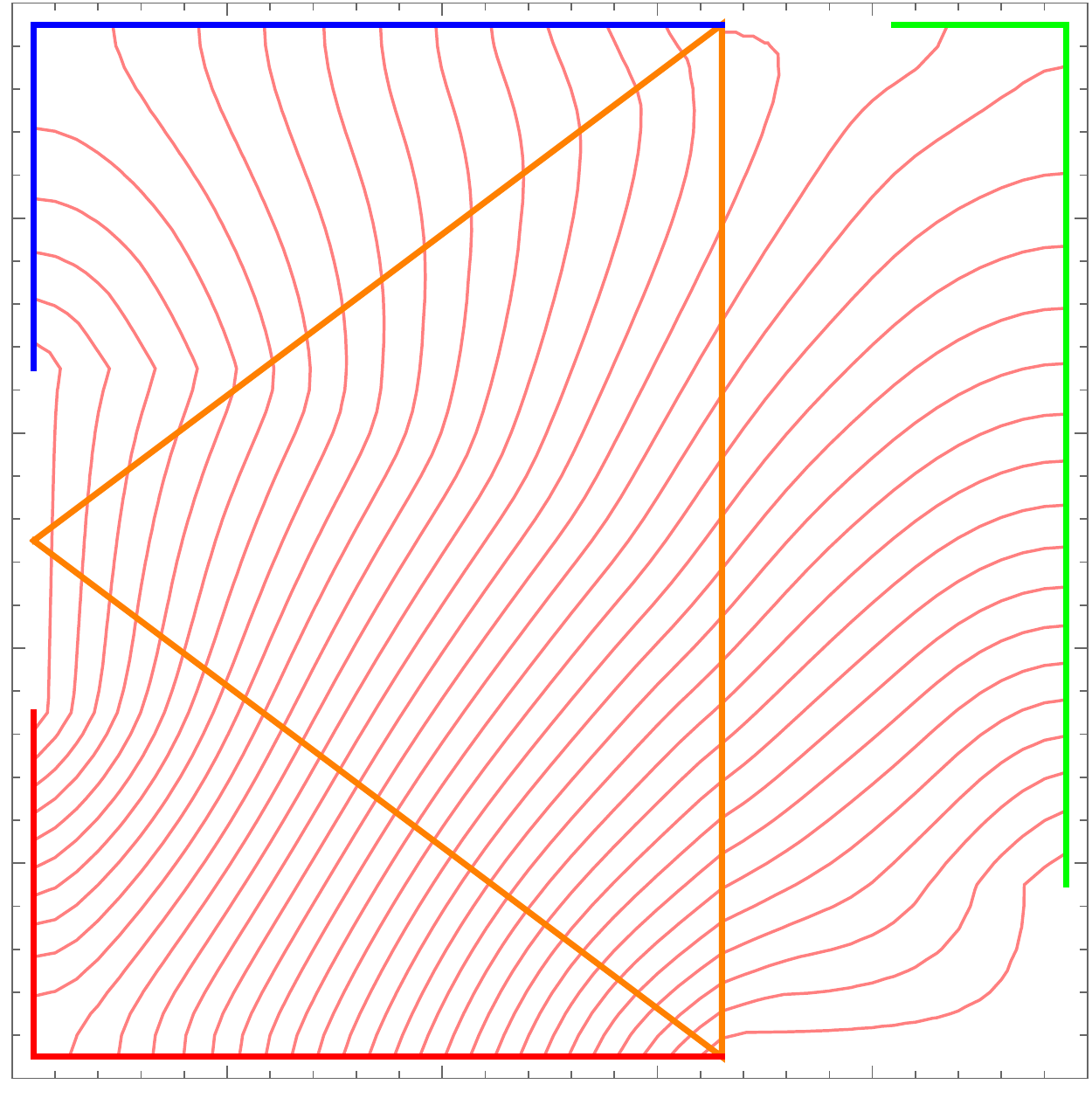} & \includegraphics[width=.4\textwidth]{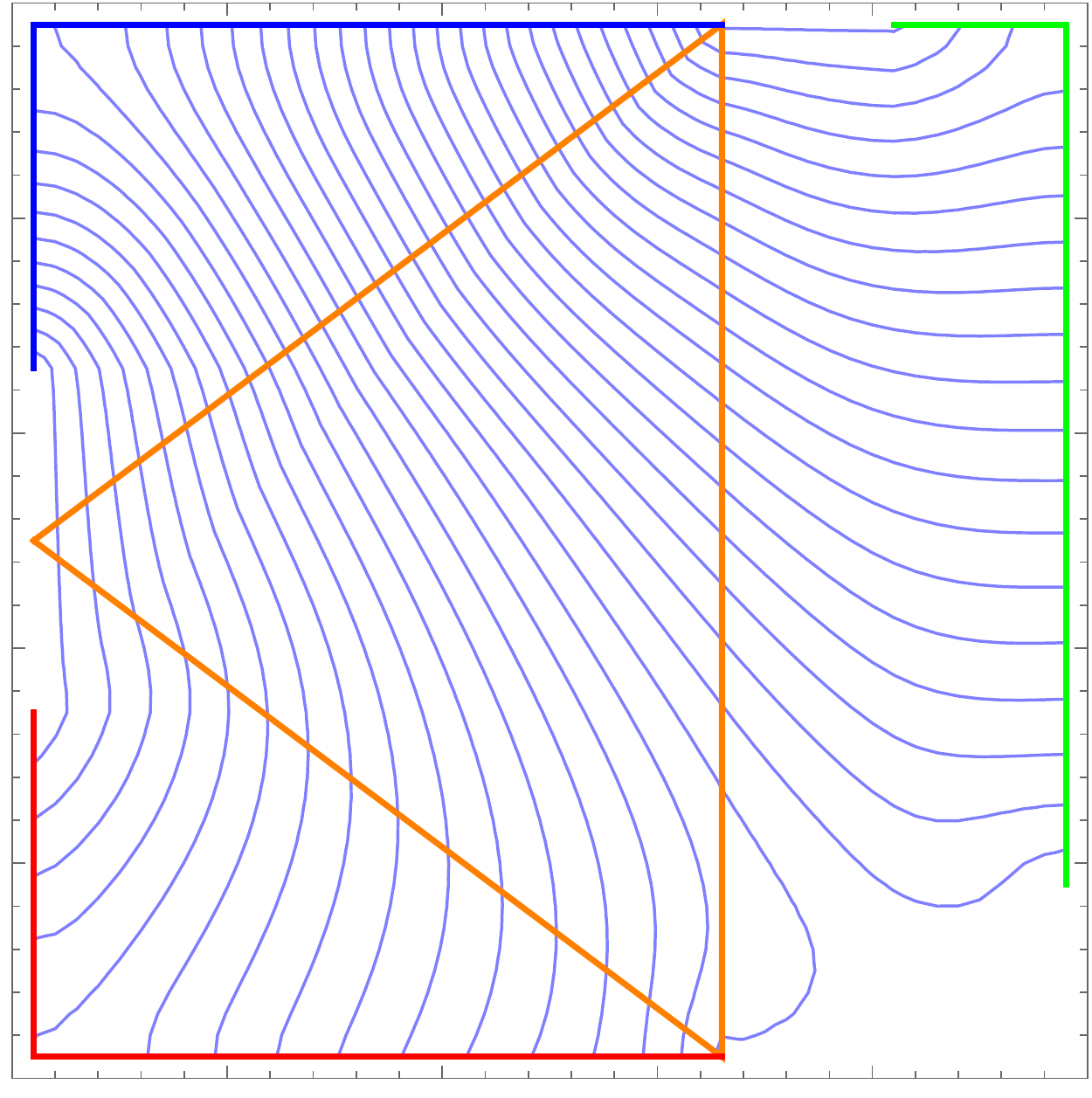} \\
$u_{A}$ & $u_{B}$ \\[6pt]
\includegraphics[width=.4\textwidth]{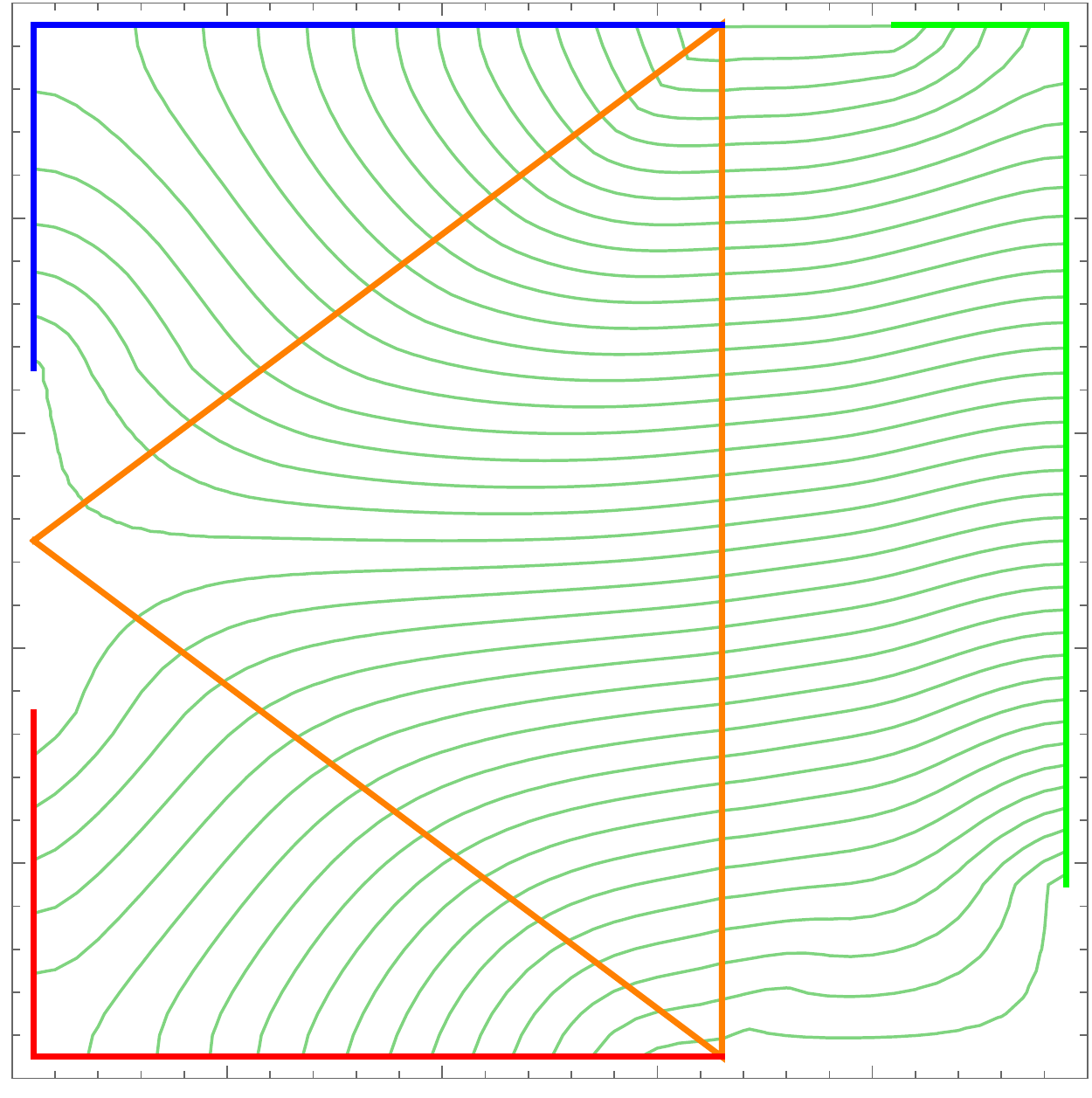} & \includegraphics[width=.4\textwidth]{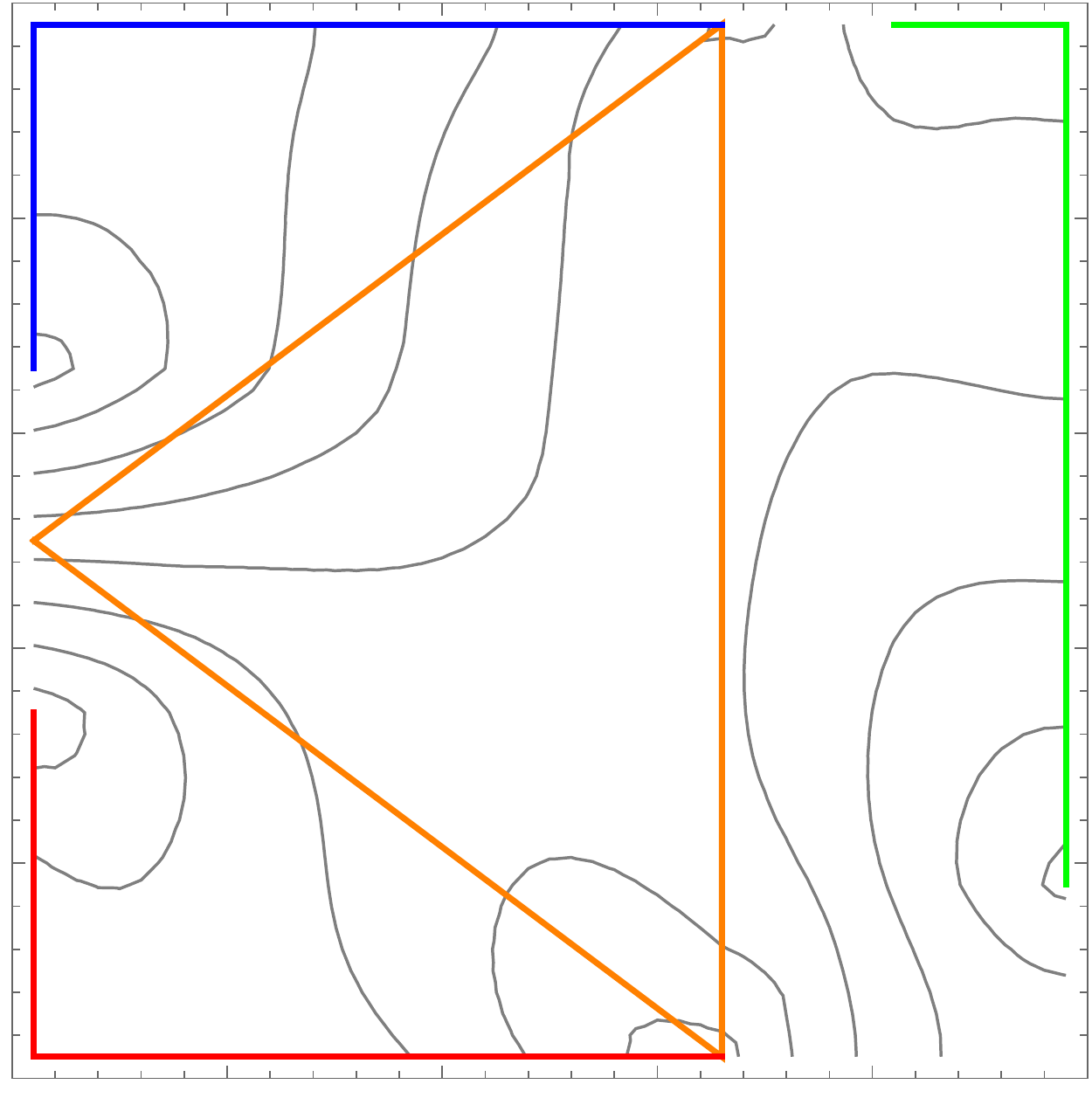} \\
$u_{C}$ & $w$ \\[6pt]
\end{tabular}
\caption{\label{fig:flowexample2}The same configuration shown in fig.\ \ref{fig:flowexample}, in terms of the vector fields $\mathcal{U}$ and $w$}
\end{figure}

\subsection{The bipartite case}\label{sec:bipartitiederv}
Having used our intuition and general properties of MFMC to define the flow formulation for $E_{w}(A:B)$ we will now provide a direct proof using convex duality which follows as a special case our multipartite derivation. Consider two regions $A$ and $B$ with RT surface $\mathcal{O}=m(AB)$ such that $\partial r(AB) = A \cup B \cup \mathcal{O}$. The minimal homology region cross section is given by
\begin{equation}
\min_{b_{p}(A:B)\sim A \, \text{rel} \, m(AB)} \area(b_{p}(A:B)).
\end{equation}

\begin{figure}[H]
\centering
\includegraphics[width=.4\textwidth]{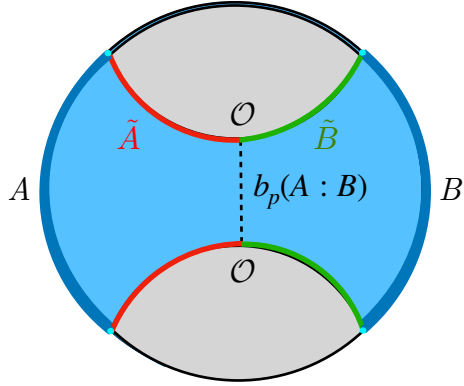}
\caption{\label{fig:partition}The bipartite homology region cross section.}
\end{figure}

Since the surface $b_{p}(A:B)$ naturally splits $\partial r(AB)$ into two regions $\tilde{A}$ and $\tilde{B}$ we can equivalently calculate the minimal surface among all partitions of $\mathcal{O}$
\begin{equation}
\min_{\tilde{A}\tilde{B}} \area(b^{\min}_{AB}) = \area(b_{p}(A:B))
\end{equation}
where $b^{\min}_{AB}$ is the minimal surface for a particular (not necessarily minimal) partition (see fig.\ \ref{fig:partition}).

Setting $n=2$ in \eqref{meopv} we can write the flow program as
\begin{equation}
\begin{split}
\max_{v^{\mu}_{A} \; v^{\mu}_{B}} \int_{A}\sqrt{h} n_{\mu}v^{\mu}_{A} + \int_{B}\sqrt{h} n_{\mu}v^{\mu}_{B} +\frac{1}{2}\int_{\mathcal{O}}\sqrt{h}n_{\mu}\left(v_{A}^{\mu}+v_{B}^{\mu}\right) \; \text{s.t. } \\ \nabla_{\mu}v_{A}^{\mu} = 0, \; \nabla_{\mu}v_{B}^{\mu} = 0, \; |v_{A}| \leq 1, \; |v_{B}| \leq 1, \; n_{\mu}v^{\mu}_{A}|_{\mathcal{O}} = n_{\mu}v^{\mu}_{B}|_{\mathcal{O}}.
\end{split}
\end{equation}

\begin{figure}[H]
\centering
\includegraphics[width=.75\textwidth]{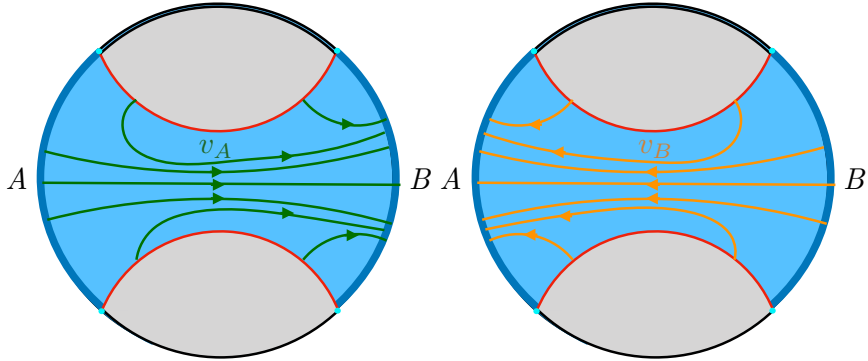}
\caption{\label{fig:vavb}An example of a general flows $v_{A}$ and $v_{B}$. Such flows are not in $\mathcal{S}'$.}
\end{figure}

The minimal homology region cross section is calculated by maximizing the flux of two flows $v^{\mu}_{A}$ and $ v^{\mu}_{B}$. These flows have the property that they are allowed to have flux on $\mathcal{O}$, but the local flux on $\mathcal{O}$ of the two flows is constrained to be the same. Such boundary threads are only counted with half the contribution of threads between $A$ and $B$ (see fig.\ \ref{fig:vavb}). Note the regions of $A$ and $B$ are large enough to source enough flow to saturate the minimal surface. As such restricting to $\mathcal{S'}$ amounts to considering those flows for which the flux on $\mathcal{O}$ is zero.

In order to make this freedom explicit we can use Stoke's theorem and the divergencelessness of the flows to find
\begin{equation}
\frac{1}{2}\int_{\mathcal{O}}\sqrt{h}n_{\mu}\left(v_{A}^{\mu}+v_{B}^{\mu}\right) = -\frac{1}{2}\left(\int_{A}\sqrt{h}n_{\mu}\left(v_{A}^{\mu}+v_{B}^{\mu}\right) + \int_{B}\sqrt{h}n_{\mu}\left(v_{A}^{\mu}+v_{B}^{\mu}\right)\right).
\end{equation}
Substituting this in the objective becomes
\begin{equation}
= \frac{1}{2}\left(\int_{A}\sqrt{h}n_{\mu}\left(v_{A}^{\mu}-v_{B}^{\mu}\right)+\int_{B}\sqrt{h}n_{\mu}\left(v_{B}^{\mu}-v_{A}^{\mu}\right)\right).
\end{equation}
To continue we make a change of variables by defining new vector fields
\begin{equation}
w^{\mu} = \frac{1}{2}\left(v_{A}^{\mu}+v_{B}^{\mu}\right), \quad u_{A}^{\mu} = v_{A}^{\mu}-w^{\mu}, \quad u_{B}^{\mu} = v_{B}^{\mu}-w^{\mu}
\end{equation}
where by definition
\begin{equation}\label{sumu}
u^{\mu}_{A}+u^{\mu}_{B} = 0.
\end{equation}

\begin{figure}[H]
\centering
\includegraphics[width=.4\textwidth]{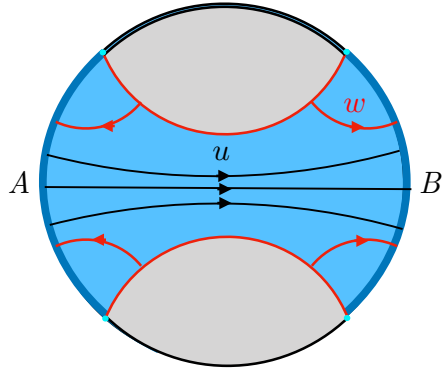}
\caption{\label{fig:uw}An example of vector fields $u$ and $w$.}
\end{figure}

The program becomes (see fig.\ \ref{fig:uw}):
\begin{equation}
\begin{split}
\max_{u_{A},u_{B},w} \;& \int_{A} \sqrt{h}n_{\mu}u_{A}^{\mu} +\int_{B} \sqrt{h}n_{\mu}u_{B}^{\mu} \; \text{s.t. } \\ &\nabla_{\mu}u_{A}^{\mu} = 0, \;\nabla_{\mu}u_{B}^{\mu} = 0, \; \nabla_{\mu}w^{\mu} = 0, \; |u_{A}+w| \leq 1, \; |u_{B}+w| \leq 1, \; n_{\mu}u^{\mu}|_{\mathcal{O}} =0.
\end{split}
\end{equation}
Because $u^{\mu}_{A}=-u^{\mu}_{B}$ a result of \eqref{sumu} the symmetry of the problem dictates $w$ can not pass through the minimum surface and therefore is unable to help increase the maximum. Thus, we can freely set $w^{\mu} = 0$ when we maximize (doing so restricts us to $\mathcal{S}'$). Accounting for normalization we have the final program
\begin{equation}
\max_{u_{A}} \; \int_{A} \sqrt{h}n_{\mu}u_{A}^{\mu} \; \text{s.t. } \nabla_{\mu}u_{A}^{\mu} = 0, \; |u_{A}| \leq 1, \; n_{\mu}u_{A}^{\mu}|_{\mathcal{O}} =0
\end{equation}
which matches \eqref{floweop}.

\subsection{Weak duality}
To understand the role of the surface $\Sigma(\mathcal{A})$ consider an arbitrary set of vector fields $(\mathcal{U},w)$ which are not necessarily maximal. By relative homology we know flux on the boundary for each $u_{i}$ must be the same as that of $\Sigma(A_{i})$. Furthermore, because $\Sigma(\mathcal{A})$ is a closed surface we must have $\int_{\Sigma(\mathcal{A})}w=0$. Thus
\begin{equation}
\sum_{i}\int_{A_{i}} u_{i} = \sum_{i}\int_{\Sigma(A_{i})}u_{i} = \sum_{i}\int_{\Sigma(A_{i})}u_{i} + \int_{\Sigma(\mathcal{A})}w = \sum_{i}\int_{\Sigma(A_{i})}u_{i} +w\,.
\end{equation}
Because of the norm bound $|u_{i}+w| \leq 1$ we have $\int_{\Sigma(A_{i})}u_{i} +w \leq \area(\Sigma(A_{i}))$. So we have established weak duality:
\begin{equation}
\sum_{i}\int_{A_{i}}u_{i} \leq \area(\Sigma(\mathcal{A})) \quad \forall \; (\mathcal{U},w)\,.
\end{equation}

\subsection{The flow from $\mathcal{O}$ is truly multipartite}
We would now like to interpret the quantity
\begin{equation}
 \int_{\mathcal{O}} \alpha = \int_{\mathcal{O}}w =  \int_{\mathcal{O}}v_{i}
\end{equation} 
for maximal flow configurations of \eqref{meopv}. By considering only solutions in $\mathcal{S}'$ we guarantee every thread of $w$ is purposefully utilized by $\mathcal{U}$ to achieve the maximum. That is there are no extraneous threads of $w$. This is obvious as a reduction in $\int_{\mathcal{O}}w$ further by deleting integral curves of $w$ would necessarily mean such a collection of vector fields was not maximal.
If $w$ is zero we know $\mathcal{U}$ are flows and $b_{p}(\mathcal{A})$ is the bottleneck. In this case $u_{i}$ saturates on $b_p(A_{i})$. When $w$ is nonzero, because the configuration $(\mathcal{U},w)\in \mathcal{S}'$, for each thread of $w$ which crosses $b_p(A_{i})$, an additional thread of $u_{i}$ can be transported across. This means the maximum flux across $b_{p}(\mathcal{A})$ is given by
\begin{equation}
\area(b_{p}(\mathcal{A})) + \int_{\mathcal{O}}w = \sum_{i}\int_{b_p(A_{i})}u_{i}.
\end{equation}
For maximal vector field configurations, $|u_{i}+w|$ saturates on $\Sigma(A_{i})$, so that
\begin{equation}
\area(\Sigma(\mathcal{A})) = \int_{\Sigma(\mathcal{A})}w + \sum_{i}\int_{\Sigma(A_{i})}u_{i} = \sum_{i}\int_{\Sigma(A_{i})}u_{i}
\end{equation}
where in the last step we have used $\int_{\Sigma(\mathcal{A})}w=0$. Now since $b_{p}(\mathcal{A})$ is homologous to $\Sigma(\mathcal{A})$ (relative to $\mathcal{O}$), we have shown
\begin{equation}
\int_{\mathcal{O}}w= \area(\Sigma(\mathcal{A}))-\area(b_{p}(\mathcal{A})).
\end{equation}
This means the flux on $\mathcal{O}$ through the manifold for a given geometry is a fixed quantity for all maximal thread configurations limited to the solution space $\mathcal{S}'$. The magnitude of the flux on $\mathcal{O}$ measures precisely the difference between the multipartite and bipartite holographic EOP proposals. In other words this flux represents how much of the holographic multipartite EOP is truly multipartite.

\subsection{Thread orientation}
So far we have described bit threads as oriented integral curves of a vector field (usually a flow). We would now like to show that it is possible to instead maximize over all valid thread configurations without reference to orientation. This can be done essentially because it is always possible to impose a standard orientation which will maximize a given thread configuration. Consider a maximal set of flows $\mathcal{V} \in \mathcal{S'}$ which calculates $E_{w}(\mathcal{A})$. Now focusing on a single flow $v_{i}$ forget about the orientation.
There are two classes of threads to consider: those between $A_{i}$ and another interval $A_{j}$ and those between $\mathcal{A}$ and $\mathcal{O}$.

For threads between $A_{i}$ and $A_{j}$ since the flux in  \eqref{meopv} is being counted on $A_{i}$ the orientation should always be imposed so that the thread exits $A_{i}$ and ends on $A_{j}$. Each such thread contributes $1$ to the objective.

Since the flux on $\mathcal{O}$ must be the same for all of $\mathcal{V}$ any thread on $\mathcal{O}$ must have the same orientation for each of the flows. Suppose the threads on $\mathcal{O}$ are oriented outward and travel to $\mathcal{A}$. For $v_{i}$ the thread will avoid going back to $A_{i}$ as this would contribute $\frac{1}{n}-1$ to the objective so instead the thread will go out to end on some $A_{j}$ this way each thread contributes $\frac{1}{n}$. The other possibility is that the threads connecting to $\mathcal{O}$ are oriented to end there. This allows a thread of $v_{i}$ to begin on $A_{i}$ and end on $\mathcal{O}$ and contribute $1-\frac{1}{n}$. The issue is that such a thread must necessarily pass through $b_{p}(A_{i})$ meaning one less thread can go between $A_{i}$ and $A_{j}$ which can never be maximal.

What we have shown is that for any maximal set of flows $\mathcal{V}$ the correct orientation is to have all threads flow out $A_{i}$ and $\mathcal{O}$ for each $v_{i}$. Therefore we can simply count the number of threads leaving these regions. Letting $N_{R}$ be the number of threads on region $R$ then the holographic multipartite EOP proposal can be written as

\begin{equation}
E_{w}(\mathcal{A}) = \max \sum_{A_{i}}N_{A_{i}}|_{v^{\mu}_{i}}+N_{\mathcal{O}}
\end{equation}
where the maximization is over valid flow configurations. Note that even though the boundary threads ``appear" for each flow they are only counted once.

\subsection{Bounds}
We now prove several bounds for the holographic multipartite EOP proposal using bit threads. For convenience and to give readers a feel for each construction we will freely switch between the $\mathcal{V}$ flows and the $\mathcal{U}$ and $w$ vector fields.

\paragraph{Properties}
Let $\mathcal{A}$ be a pure state that is $\cup_{i}A_{i} = \partial M$. In this case $\mathcal{O}=\emptyset$. Thus when maximizing \eqref{meopv} each flow $v_{i}$ is geometrically limited by $m(A_{i})$ so that the flux calculates $S(A_{i})$. Thus:
\begin{equation}
\text{For pure states: } E_{w}(\mathcal{A}) = \sum_{i}S(A_{i}).
\end{equation}

We will now show that $E_{w}(\mathcal{A})$ can only be smaller after discarding part of one of the subregions. This can be thought of intuituvely as reducing the amount of space the threads can occupy. Let $\mathcal{A'} = \{X\cup A_{1},A_{2},...\}$ we then have the two programs
\begin{equation}
\begin{split}
E_{w}(\mathcal{A'}) =& \max_{\mathcal{V},\alpha} \sum_{i}\int_{A_{i}}\sqrt{h} n_{\mu}v^{\mu}_{i} + \int_{\mathcal{O'}}\sqrt{h}\alpha \; s.t.  \; n_{\mu}v^{\mu}_{i}|_{\mathcal{O'}} = \alpha \; \text{on $r(\mathcal{A'})$}\\
E_{w}(\mathcal{A}) =& \max_{\mathcal{V},\alpha} \sum_{i}\int_{A_{i}}\sqrt{h} n_{\mu}v^{\mu}_{i} + \int_{\mathcal{O}}\sqrt{h}\alpha \; s.t.  \; n_{\mu}v^{\mu}_{i}|_{\mathcal{O}} = \alpha \; \text{on $r(\mathcal{A})$}.
\end{split}
\end{equation}
Our goal will be to write the second of these as a program on $r(A')$ with additional constraints. This can be done by viewing the program as being on $r(\mathcal{A'})$ and imposing the additional constraint that any threads must remain in or end on the boundary of $r(\mathcal{A})$. This is always possible since $r(\mathcal{A}) \subset r(\mathcal{A}')$. Therefore using Theorem \ref{T1}

\begin{equation}
E_{w}(\mathcal{A})\leq E_{w}(\mathcal{A'}).
\end{equation}

\paragraph{Upper bound}
Consider the maximal flow $v_{A_{i}}$ which saturates on $\Sigma(A_{i})$ with flux $\area(\Sigma(A_{i}))$. In order for this flux to count maximally to the objective, the flow will want to end on $\mathcal{A}\backslash A_{i}$ which is limited by the geometric obstruction $\sum_{j \neq i}b_{p}(A_{j})$. The flux is further limited by the boundary threads from the other flows in the amount $\sum_{j \neq i}\area(\Sigma(A_{j}))-\sum_{j \neq i}\area(b_{p}(A_{j}))$. From these considerations we have the bound
\begin{equation}
E_{w}(\mathcal{A}) \leq 2\min_{i}\sum_{j\neq i}\area(b_{p}(A_{j})).
\end{equation}

\begin{figure}[H]
\centering
\includegraphics[width=.4\textwidth]{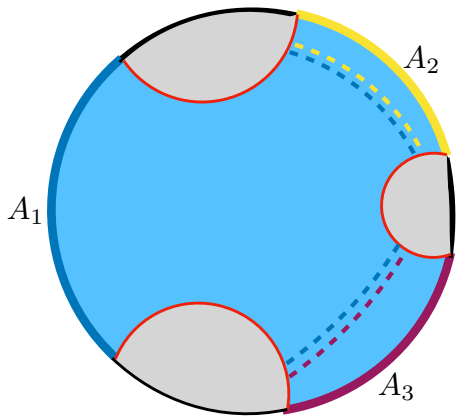}
\caption{\label{fig:upperboundmeop}In this case $\area(\Sigma(A_{1})) = \area(\Sigma(A_{2})) + \area(\Sigma(A_{3}))$ and the flux on $\mathcal{O}$ will be zero.}
\end{figure}

Note such a surface is in the same relative homology class as the surfaces we have been considering and is thus an allowed candidate when minimizing (see fig.\ \ref{fig:upperboundmeop}). This inequality saturates when one boundary region has enough capacity to send flow and saturate the bipartite surfaces of the other regions by itself. In such cases there is no flux on $\mathcal{O}$ (put another way $w=0$).

For a pure state this bound reduces to
\begin{equation}
E_{w}(\mathcal{A}) \leq \min_{i}( S(A_{i}) + \ldots + S(A_{1}\ldots A_{i-1},A_{i+1}\ldots A_{n})+\ldots +S(A_{n}))
\end{equation}
which matches the bound given by \cite{Umemoto2018}. Note that part of $S(A_{1}\ldots A_{i-1},A_{i+1}\ldots A_{n})$ is part of $\mathcal{O}$ in the non pure case. In our construction this surface can never be counted indicating this bound is only saturated in the case of purity.

\paragraph{Lower bounds}

\begin{figure}[H]
\centering
\includegraphics[width=.4\textwidth]{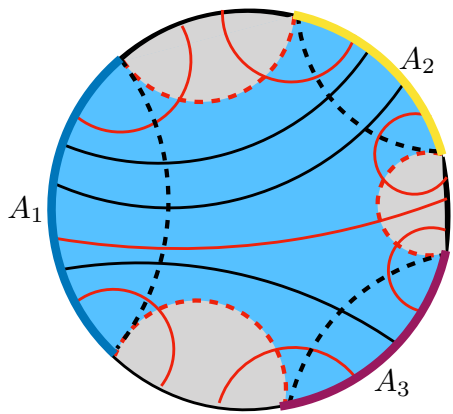}
\caption{\label{fig:lb} Minimal surfaces are shown as dashed lines and flows as solid lines. The maximum multiflow the manifold can support is represented here, but not all flow lines are shown as each dashed line is saturated. To create a valid flow configuration in the feasible set of \eqref{meopv} we delete each thread shown in red.}
\end{figure}

As in the bipartite case, we will derive a lower bound on the holographic multipartite EOP proposal via an explicit construction. As previously described the maximum total flux a multiflow can support is the sum of the entropies of the regions whose union is $\partial M$. To get a valid flow configuration of $E_{w}(\mathcal{A})$ we can choose to look at multiflows which remain in $r(\mathcal{A})$. The simplest way (though not optimal) to create such a flow is to delete each thread which passes through $m(\mathcal{A})$. The number of times such threads leave $r(\mathcal{A})$ is exactly $S(A_{1}\ldots A_{n})$ which on the boundary are each counted twice (see fig.\ \ref{fig:lb}). That is, we have shown the holographic multipartite EOP proposal is bounded below by the multipartite information \footnote{The explicit construction of this flow provides evidence that an information theoretic quantity the squashed entanglement will always be saturated holographically. See \ref{sec:squashed} for more details.}
\begin{equation}
E_{w}(\mathcal{A}) \geq I(\mathcal{A}) \coloneqq \sum_{i} S(A_{i}) - S(A_{1}\ldots A_{n}).
\end{equation}

To get another lower bound consider the set of flow configurations of \eqref{meopuw}. If we restrict ourselves to only flow configurations with $w=0$ then the geometric barrier $b_{p}(\mathcal{A})$ is unavoidable. Thus, if one explicitly imposes the condition $w=0$ to \eqref{meopuw} the resulting program will calculate $b_{p}(\mathcal{A})$. So by Theorem \ref{T1} we have the bound
\begin{equation}
E_{w}(\mathcal{A}) \geq \area(b_{p}(\mathcal{A}))
\end{equation}
which was first shown in \cite{Umemoto2018}.

\section{Future directions}\label{sec4}

\paragraph{Flux from boundary and bulk intersections}
In this article we have utilized convex duality to derive flow descriptions of the holographic bipartite and multipartite EOP proposals. In the latter case we find the addition of a coupled boundary flow which after restricting to $\mathcal{S}'$ is truly multipartite. Upon examination of the dualization process this is not a unique feature, but rather should be present in the flow dual to any surface with nontrivial boundary intersection. Similarly, as can be seen in \ref{sec:bulk}, bulk intersection points will generically lead to a bulk flux as indicated by the vector fields having a coupled non-zero divergence. It would be interesting to explore the generality of these features in more detail.

\paragraph{Dynamics}
The maximin construction of \cite{Wall_2014} allows for covariant HRT surfaces \cite{Hubeny_2007} to be calculated by minimizing a surface on a fixed time slice and then maximizing over all time slices. Since our construction of holographic EOP proposal holds for a fixed time slice it follows one could define analogously the covariant EOP by calculating a maximal flow and then maximizing over all possible time slices. This seems to indicate that many of the features of the holographic EOP proposal should easily generalize to the covariant case. It would be worthwhile to work out these features in detail and describe them in the context of the full covariant construction of bit threads \cite{cbt}.

\paragraph{Bit threads and entanglement distillation}
For the standard entanglement entropy a bit thread is normally described as representing a distilled EPR pair between the boundary regions it connects. This holds too for the multiflow where the collection of flows can be simultaneously placed on the manifold. The question then arises how one should interpret the flows of \eqref{meopv} for the holographic multipartite EOP proposal which should presumably be related to a distillation of the multipartite entanglement. While the bipartite contributions may retain their interpretation as EPR pairs it is less clear how to represent the threads of $\mathcal{V}$ which connect to $\mathcal{O}$ (or equivalently the ancillary vector field $w$). This is because even though each $v_{i}$ will have the same number of such threads the region on which they end can and must change amongst the various elements of $\mathcal{V}$. The need for multiple threads to describe a multipartite state is not dissimilar to the original description of the GHZ state given in \cite{Freedman2017}. Possibly one could interpret each $v_{i}$ as a distinct purification such that a collection of purifications is needed to describe $E_{p}(\mathcal{A})$. It would be very interesting to see if direct CFT calculations of $E_{p}(\mathcal{A})$ would give rise to similar features.

\paragraph{Holographically multipartite entanglement is inefficient}
We have established the bounds
\begin{equation}
\area(b_{p}(\mathcal{A})) \leq E_{w}(\mathcal{A}) \leq 2\min_{i}\sum_{j\neq i}\area(b_{p}(A_{j}))
\end{equation}
for the holographic multipartite EOP. Note that the flow configurations in both extremes have $\alpha=0$. That is, in the language of our framework, these configurations have no multipartite entanglement. This is not so surprising as the threads on $\mathcal{O}$ are rather inefficient. This is because one thread of each ``color" is required to make a ``unit" contribution. This is in contrast to the bipartite threads connecting $A_{i}$ and $A_{j}$ which only require one. The multipartite threads simply take up more space in the geometry. It would be interesting to understand why it is ``hard" to get multipartite entanglement. Potentially an understanding of the relation between the choice of boundary configurations and the maximum multipartite contribution could shed light on this feature.

\paragraph{Towards proving the conjecture}
It has been firmly established that the homology region cross section and its multipartite generalization are interesting bulk surfaces with possible information theoretic content. What remains to be done is to firmly establish the role these surfaces play in our understanding of AdS/CFT. Throughout this paper we have worked under the assumption $E_{p}(\mathcal{A})=E_{w}(\mathcal{A})$ following evidence such as the original proposals \cite{Umemoto_2018,Nguyen2018,Umemoto2018,Bao:2018aa} and work towards a proof \cite{Bao:2018ac,Guo:2019aa,bao:aa}. Still there is reason to proceed with some healthy skepticism. Alternate proposals exist on both sides: \cite{Dutta:aa} calculates the bipartite homology region cross section using a canonical purification whose geometry is that of a wormhole. In this formalism explicit CFT calculations were able to be done using the replica trick. Other duals such as the logarithmic negativity \cite{Kudler-Flam:2018aa} and the odd entanglement entropy  \cite{Tamaoka_2019} have appeared as well. Similarly the ``bipartite dominance" conjecture of \cite{Cui:2018aa} predicts that holographically $E_{p}(A:B)=\frac{1}{2}I(A:B)$ always. 

Almost all of the work done up to this point has focused on the bipartite case where things are much simpler. However from the bit thread perspective it is only when one is working the multipartite case that interesting features appear specifically the coupling of flows due to the boundary intersection of the minimal surface. It is our hope that this work and the characterization $\Sigma$ in terms of bit threads may be elucidating to others and provide inertia towards a complete holographic characterization that answers these many questions.

\acknowledgments
The work of J.H. is supported in part by the National Science Foundation under the IGERT: Geometry and Dynamics Award No.\ 1068620 and in part by the Simons Foundation through \emph{It from Qubit: Simons Collaboration on Quantum Fields, Gravity, and Information}. The work of M.H. is supported in part by the Simons Foundation through \emph{It from Qubit: Simons Collaboration on Quantum Fields, Gravity, and Information} and in part by the Department of Energy Office of High-Energy Physics through Award DE-SC0009987. We would like to thank Cesar Ag\'on, Ning Bao, Bartlomiej Czech, Jesse Held, Veronika Hubeny, Charles Stine, Koji Umemoto, and Mark Van Raamsdonk for useful discussion. We would also like to thank Harsha Hampapura and Jesse Held for useful comments on an earlier draft. J.H. would like to thank the Instituto Balseiro, the International Centre for Theoretical Sciences (ICTS), the Institute for Advanced Study (IAS), the Israel Institute for Advanced Study (IIAS), and the Yukawa Institute for Theoretical Physics (YITP) for hospitality during various stages of this work.

\appendix

\section{Multiflows}\label{sec:mf}
In this appendix we define the multiflow and then for convenience state its most important properties including a flow proof of monogamy of mutual information. The entirety of this construction is due to the authors of \cite{Cui:2018aa} where additional details can be found.
\begin{definition}[Multiflow]\label{multiflow}
On a Riemannian manifold $M$ with boundary $\partial M$, split $\partial M$ into n non overlapping regions $\mathcal{A} = \{A_{i}\}$ . A multiflow is a collection of vector fields $\mathcal{V}_{m} = \{v_{ij}^{\mu}\}$ which satisfy the following properties:
\begin{outline}
\1$n_{\mu}v^{\mu}_{ij} =0 \; \text{on } A_{k}, \; k \neq i,j$
\1 $v_{ij} = -v_{ij}$
\1 $\nabla_{\mu}v_{ij}^{\mu} = 0$
\1$\sum\limits_{i<j}\left|v_{ij}\right| \leq 1.$
\end{outline}
\end{definition}
We will use $\mathcal{V}$ to indicate a set of flows, but we will reserve the notation $\mathcal{V}_{m}$ to indicate a multiflow which satisfies all the conditions of \ref{multiflow}.
\begin{definition}[Subflow]
Given a multiflow $\mathcal{V}_{m}$ consider the union of any number of boundary regions in $\mathcal{A}$ call it $B$. The subflow of $B$ is the sum of all flows leaving $B$
\begin{equation}
v_{B} = \sum\limits_{i \text{ s.t } A_{i} \in B, \; j} v_{ij}
\end{equation}
\end{definition}
\begin{theorem}[Existence of a maximal multiflow]\label{existmultiflow}
There exists a multiflow $\mathcal{V}_{m}$ such that all single interval subflows and any one additional subflow, call the associated boundary region $B$, are maximal. That is
\begin{equation}
S(A_{i}) = \int_{A_{i}} v_{A_{i}} \quad S(B) = \int_{B} v_{B}
\end{equation}
for some $\mathcal{V}_{m}$. In general all other subflows will not be maximal.
\end{theorem}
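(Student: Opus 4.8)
The plan is to turn the simultaneous-maximization statement into a single scalar identity and then attack that identity by convex duality. First I would record the elementary upper bounds. For each $i$ the single-region subflow $v_{A_i}=\sum_j v_{ij}$ is divergenceless and obeys $|v_{A_i}|\le\sum_{j\neq i}|v_{ij}|\le\sum_{k<l}|v_{kl}|\le1$, so it is itself a flow; by max flow--min cut $\int_{A_i}v_{A_i}\le S(A_i)$. For $B$ a union of some of the $A_i$, the antisymmetry $v_{ij}=-v_{ji}$ makes the contributions of pairs internal to $B$ cancel, leaving $v_B=\sum_{i\in B,\,j\notin B}v_{ij}$, which likewise satisfies the joint norm bound, so $\int_B v_B\le S(B)$. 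These bounds hold for every multiflow.

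Second, these inequalities let me reduce the theorem to a single equality. Since each of the $n+1$ terms of $\sum_i\int_{A_i}v_{A_i}+\int_B v_B$ is separately bounded by the corresponding entropy, it suffices to exhibit one multiflow for which the total attains $\sum_i S(A_i)+S(B)$: saturation of the sum forces saturation of every term, which is precisely the claim. The reverse inequality $\max_{\mathcal{V}_m}\big[\sum_i\int_{A_i}v_{A_i}+\int_B v_B\big]\le\sum_i S(A_i)+S(B)$ being automatic from the upper bounds, I only need the matching lower bound.

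Third, I would obtain this lower bound by dualizing the multiflow program. Maximizing the total flux over multiflows is a convex program --- the objective is linear; divergencelessness, the support conditions and antisymmetry are linear; the shared norm bound $\sum_{i<j}|v_{ij}|\le1$ is convex --- so I would introduce scalar potentials for the divergence constraints and a single non-negative density for the norm bound, and read off the dual, a min-cut-type program in which the bulk is labeled and the cut counts interface areas. The key structural observation is that the family $\{A_1,\dots,A_n,B\}$ is laminar: any two of its members are either disjoint or nested. Consequently, by the nesting property of flows (equivalently, minimality of Ryu--Takayanagi surfaces in a negatively curved bulk), the surfaces $m(A_1),\dots,m(A_n),m(B)$ can be chosen mutually non-crossing.

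The hard part is exactly the passage from weak to strong duality here, since a multi-commodity flow program can in general exhibit a duality gap and one must use the geometry to show the dual min cut factorizes. Given the non-crossing RT surfaces furnished by laminarity, the optimal bulk labeling decomposes so that the cut area becomes additive and equals $\sum_i \area(m(A_i))+\area(m(B))=\sum_i S(A_i)+S(B)$, closing the gap and producing a multiflow attaining the total. An equivalent, more constructive route builds the multiflow directly by applying nesting along the laminar chain and taking differences of the resulting nested flows; there the same non-crossing property is what guarantees the combination still obeys $\sum_{i<j}|v_{ij}|\le1$. This framing also explains why only one extra region $B$ is allowed: a second subflow whose region crossed $B$ would destroy laminarity, the surfaces would necessarily intersect, and simultaneous saturation would fail.
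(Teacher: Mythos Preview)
The paper does not actually prove this theorem. In the appendix on multiflows the statement is simply recorded alongside the definitions, with the explicit remark that ``the entirety of this construction is due to the authors of \cite{Cui:2018aa} where additional details can be found.'' There is no argument in the paper to compare your proposal against.

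For what it is worth, your outline is broadly in the spirit of the cited reference: reduce simultaneous saturation to a single scalar identity by summing the termwise upper bounds, then attack that identity by Lagrangian duality for the convex multiflow program. Two comments. First, the proof in \cite{Cui:2018aa} does not really run through a laminarity/non-crossing argument for the RT surfaces in the way you sketch; their base result (all single-interval subflows maximal) is obtained by directly dualizing the objective $\sum_i\int_{A_i}v_{A_i}$ and analyzing the resulting min-cut functional, and the extension to one additional composite region $B$ is handled as a separate step rather than folded into the same dualization. Second, the sentence ``the optimal bulk labeling decomposes so that the cut area becomes additive'' is exactly the nontrivial content of strong duality for this multi-commodity problem; you correctly flag it as the hard part, but as written it is asserted rather than shown. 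Your laminarity observation about $\{A_1,\dots,A_n,B\}$ is true and does give the right intuition for why only one extra subflow can be accommodated, but it does not by itself close the duality gap.
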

\begin{theorem}[Monogamy of mutual information (MMI)] Consider the case of $4$ boundary regions $A,B,C,D$
\begin{equation} \label{MMI}
-I_{3}(A:B:C) = S(AB)+S(AC)+S(BC) -S(A)-S(B)-S(C)-S(ABC) \geq 0
\end{equation}
\end{theorem}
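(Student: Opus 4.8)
The plan is to build a single maximal multiflow adapted to four regions and then read off MMI from the fluxes of its subflows, in the spirit of the proof in \cite{Cui:2018aa}. First I would treat $A,B,C$ together with $D:=(ABC)^c$ as a partition of $\partial M$ into four regions and invoke Theorem~\ref{existmultiflow} to obtain a multiflow $\mathcal{V}_m=\{v_{ij}\}$ whose four single-region subflows are \emph{simultaneously} maximal,
\begin{equation}
S(A)=\int_A v_A,\quad S(B)=\int_B v_B,\quad S(C)=\int_C v_C,\quad S(D)=\int_D v_D,
\end{equation}
where $v_A=v_{AB}+v_{AC}+v_{AD}$, and analogously for $v_B,v_C,v_D$. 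Since $D$ is the complement of $ABC$ the two regions share the same minimal surface, so $S(ABC)=S(D)$, and it suffices to prove $S(AB)+S(AC)+S(BC)\ge S(A)+S(B)+S(C)+S(D)$.

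Next I would bound each two-region entropy below by the flux of the corresponding subflow. The subflow of the region $A\cup B$ is $v_A+v_B=v_{AC}+v_{AD}+v_{BC}+v_{BD}$ (the $v_{AB}$ and $v_{BA}$ pieces cancel by antisymmetry); being a subcollection of $\mathcal{V}_m$ it obeys $|v_A+v_B|\le\sum_{i<j}|v_{ij}|\le1$ and is divergenceless, hence is an admissible flow, so $S(AB)\ge\int_{A\cup B}(v_A+v_B)$. Using that each component $v_{ij}$ is divergenceless and has vanishing normal flux on every $A_k$ with $k\neq i,j$, a short computation reduces this to $S(AB)\ge S(A)+S(B)-2\Phi_{AB}$, where $\Phi_{AB}:=\int_A v_{AB}$ is the net pairwise flux between $A$ and $B$. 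The identical manipulation gives $S(AC)\ge S(A)+S(C)-2\Phi_{AC}$ and $S(BC)\ge S(B)+S(C)-2\Phi_{BC}$.

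Finally I would sum the three inequalities and eliminate the $\Phi$'s using the maximal single-region fluxes. Expanding $S(A)=\Phi_{AB}+\Phi_{AC}+\Phi_{AD}$ together with its analogues (note $\Phi_{ij}=\Phi_{ji}$, since $v_{ij}=-v_{ji}$ and $v_{ij}$ carries zero total flux), one finds the clean identity $S(A)+S(B)+S(C)-S(D)=2(\Phi_{AB}+\Phi_{AC}+\Phi_{BC})$, so the summed lower bound collapses \emph{exactly} to $S(A)+S(B)+S(C)+S(D)$ and MMI follows. The one nontrivial step, and the part I expect to be the main obstacle, is the flux bookkeeping: systematically using the antisymmetry $v_{ij}=-v_{ji}$, the support property $n_\mu v_{ij}^\mu=0$ on $A_k$ for $k\neq i,j$, and the vanishing of the total flux of each $v_{ij}$, so that all contributions from the ``spectator'' components $v_{AD},v_{BD},v_{CD}$ cancel and every term reduces to the pairwise fluxes $\Phi_{ij}$.
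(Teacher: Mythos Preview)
Your proposal is correct and follows essentially the same approach as the paper: both invoke Theorem~\ref{existmultiflow} for the four-region partition, bound each two-region entropy below by the flux of the corresponding subflow, and then use maximality of the single-region subflows to collapse the sum to $S(A)+S(B)+S(C)+S(ABC)$. The only difference is that you make the ``expanding out the subflows'' step explicit via the pairwise fluxes $\Phi_{ij}$ and the identity $S(A)+S(B)+S(C)-S(D)=2(\Phi_{AB}+\Phi_{AC}+\Phi_{BC})$, whereas the paper carries out the same cancellation more tersely by directly rewriting the summed subflow fluxes as $\int_A v_A+\int_B v_B+\int_C v_C-\int_{ABC}v_D$.
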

\begin{proof}
The proof follows almost immediately by writing out the subflows and using existence of a maximal multiflow. Consider the two interval subflows $v_{AB},v_{AC},v_{BC}$. In general these can not all be made maximal thus
\begin{equation}
S(AB)+S(BC)+S(AC) \geq \int_{AB}v_{AB} +\int_{AC}v_{AC} + \int_{BC}v_{BC}.
\end{equation}
Expanding out the subflows and using maximality of the single interval subflows one has
\begin{equation}
= \int_{A} v_{A} + \int_{B} v_{B} + \int_{C} v_{C} -\int_{ABC} v_{D} = S(A) +S(B) + S(C) + S(ABC)
\end{equation}
Subtracting the two sides gives MMI.
\end{proof}

\section{The integer relaxation}\label{sec:relax}

\begin{figure}[H]
\centering
\includegraphics[width=.5\textwidth]{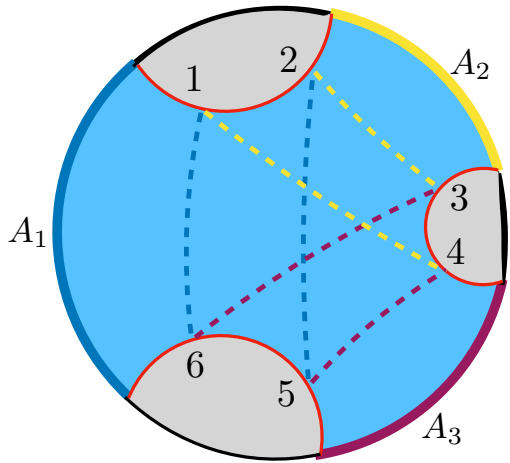}
\caption{\label{fig:doubletriangle}An allowed surface configuration after the integer relaxation.}
\end{figure}

\begin{figure}[H]
\begin{tabular}{ccc}
\includegraphics[width=.333\textwidth]{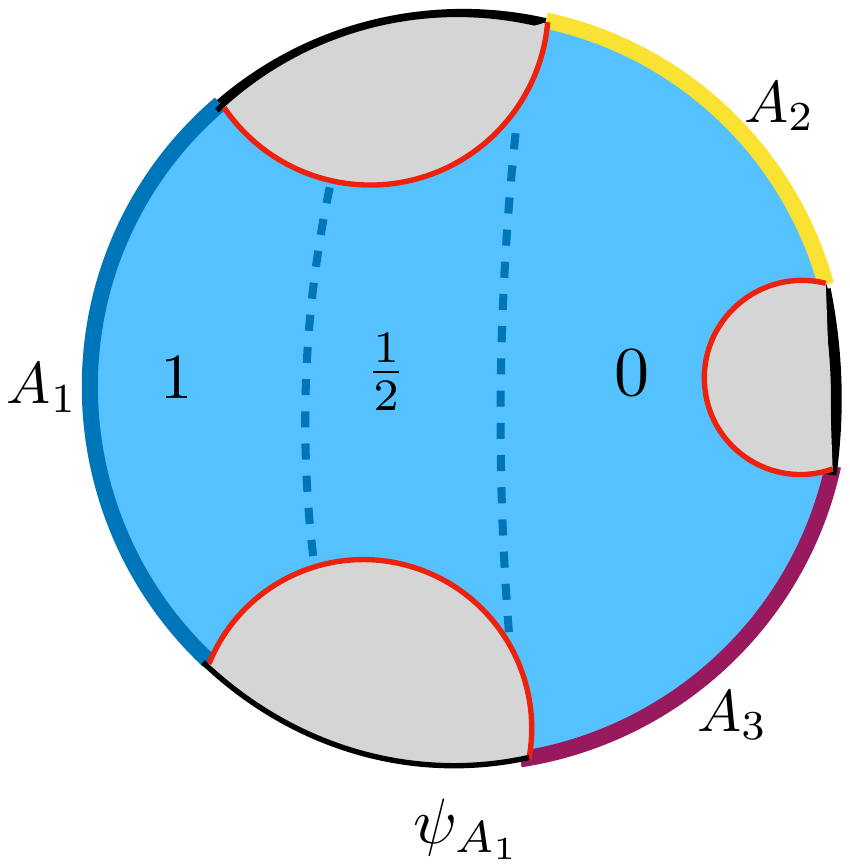} & \includegraphics[width=.333\textwidth]{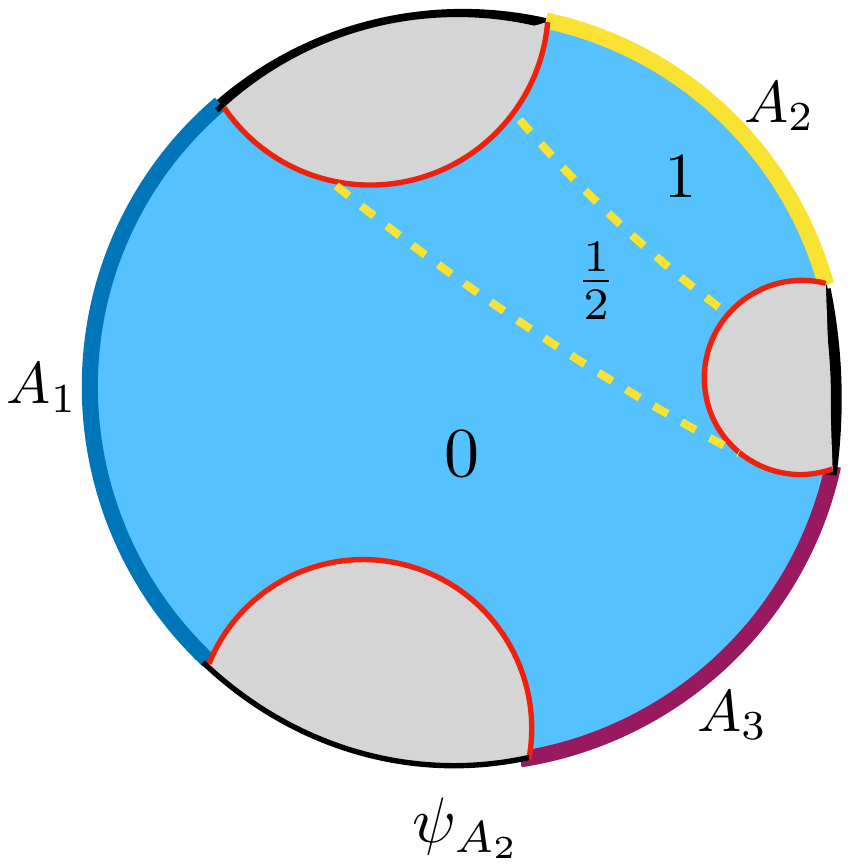} & \includegraphics[width=.333\textwidth]{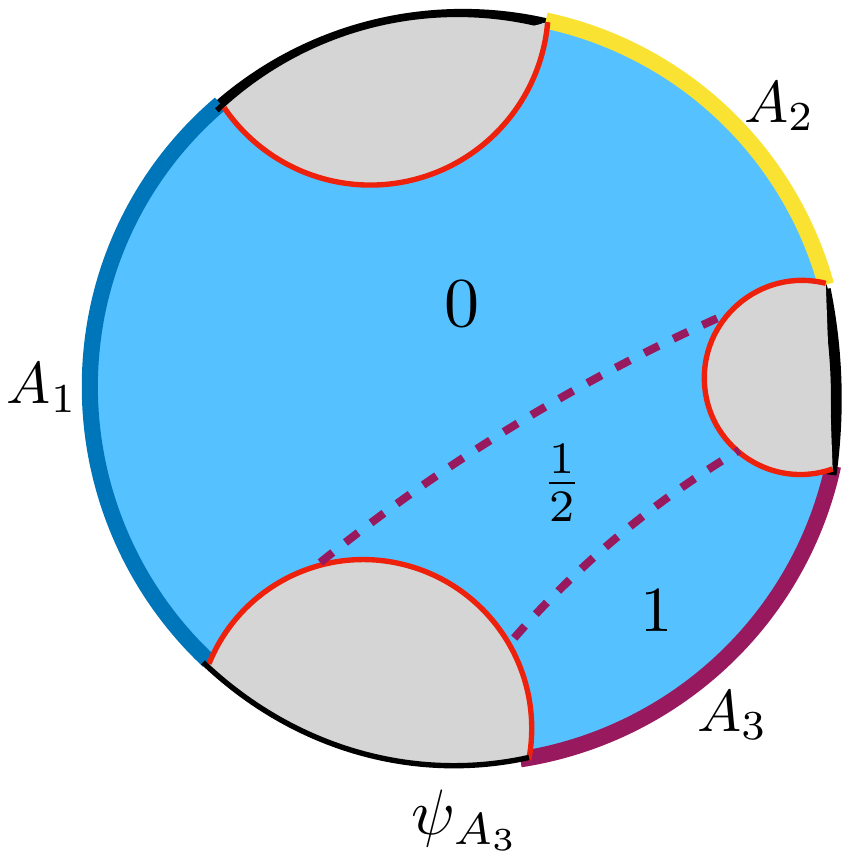}\\
\end{tabular}
\caption{\label{fig:psiconfig}The corresponding configuration of $\psi_{i}$'s.}
\end{figure}

In our derivation of the the flow program for the holographic multipartite EOP we were required to make the the following relaxation of the primal program
\begin{equation}
\text{On $\mathcal{O}$: } \psi_{i}  \in \{0,1\} \Rightarrow \psi_{i}  \in \mathbb{R}.
\end{equation}
Notably this allows solutions where each $\psi_{i}$ changes in more than one location and certain portions of $\mathcal{O}$ are shared by multiple $\psi_{i}$'s (for example consider the configuration of surfaces and the corresponding $\psi_{i}$'s given by figs.\ \ref{fig:doubletriangle} and \ref{fig:psiconfig}).

\begin{figure}[H]
\centering
\includegraphics[width=.5\textwidth]{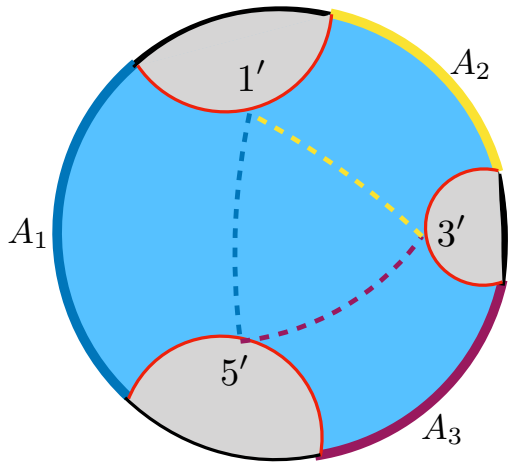}
\caption{\label{fig:singletriangle}An integer surface configuration.}
\end{figure}

It is possible to construct a geometry (though granted a highly contrived one) which ``forces" the lengths of these surfaces to be small making it the correct minimum to the program \eqref{mprelaxedagain} which would be different from the optimal configuration of \eqref{umemotoprop} (which is shown in fig.\ \ref{fig:singletriangle}) . This demonstrates that our relaxed program is no longer generically the same as the original proposal.

We would now like to prove that for a $2+1d$ bulk with everywhere negative curvature and three boundary regions the relaxed program  \eqref{mprelaxedagain} will always be the same as the original proposal \eqref{umemotoprop}. That is we can not do any better by allowing the $\psi_{i}$'s to share $\mathcal{O}$ and the optimal solution will in fact be integer.

For this setup the surface $\mathcal{O}$  are geodesics of the manifold. Because the space is negatively curved we are guaranteed that the distance between points on different geodesics is a convex function of the position on each geodesic, parametrized by distance along the geodesic. We consider the family of all possible solutions with two points on $\mathcal{O}$ where the points are labeled by their location on (or affine distance along) $\mathcal{O}$. The objective is the weighted area of the six surfaces connecting these points
\begin{equation}
\area(\Sigma_{2})=\frac{1}{2}(d_{16}+d_{63}+d_{32}+d_{25}+d_{54}+d_{41})
\end{equation}
which is a convex function in all six variables of the locations on $\mathcal{O}$.
We then consider the analogous problem with only one point on each part of $\mathcal{O}$ and furthermore pick the points $1',2',3'$ such that we are at a the global minimum
\begin{equation}
\area(\Sigma)=d_{1'3'}+d_{3'5'}+d_{5'1'}.
\end{equation}
Thus taking $1=2=1',\;3=4=3',\;5=6=5'$ we find this configuration is also a minimum of $\area(\Sigma_{2})$. Since we are at a minimum the gradient is zero thus deforming the surfaces by moving any one of the end points can only increase the value of the objective. Convexity demands that this is the global minimum of $\area(\Sigma_{2})$.

This argument has obvious generalizations to more complex configurations of $\psi_{i}$'s and larger number of boundary regions and shows for these cases \eqref{mprelaxedagain} is exactly dual to \eqref{umemotoprop}. It would be interesting to try to generalize this reasoning to higher dimensions, but we leave this to future work.\footnote{It may possible to prove a form of such a result by utilizing the quotient geometry induced by the relative homology.}

\section{Multipartite surfaces with bulk intersection}\label{sec:bulk}

\begin{figure}[H]
\centering
\includegraphics[width=.5\textwidth]{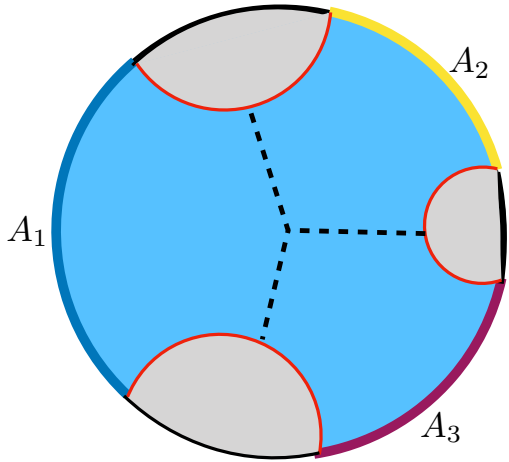}
\caption{\label{fig:mercedes}An example of a class of surfaces in the bulk which allow for bulk intersection points.}
\end{figure}

Another surface similar to the one which gives the holographic multipartite EOP proposal has appeared in the literature \cite{Bao:2018aa}
\begin{equation}
E_{m}(\mathcal{A}) = \min \area(\text{Surfaces which split $r(\mathcal{A})$ into $n$ regions homologous to each $A_{i}$}).
\end{equation}
The key distinction being the allowance of a bulk intersection point (see fig.\ \ref{fig:mercedes}).

Following convex dualization we find some interesting parallels in the flow descriptions between $E_{w}(\mathcal{A)}$ and $E_{m}(\mathcal{A})$.
For simplicity we will dualizing assuming the boundary is pure and then generalize to the case of nonpurity. We dualize by first performing a convex and integer relaxation. We will ignore analogous complications due to the possible inequivalence of the programs due to the integer relaxation. Note that because the minimal surface meets in the bulk we get a constraint on the scalar fields which also must be imposed in the bulk.
\begin{equation}
\min \sum_{i}\int_{M}\sqrt{g} |\nabla_{\mu}\psi_{i}| \quad \text{s.t.} \quad \psi_{i}|_{A_{j}}=\delta_{ij}, \quad \sum_{i}\psi_{i}|_{M} = 1
\end{equation}
We get the Lagrangian
\begin{equation}
\begin{split}
L(\{\psi\},\{\phi\},\{v^{\mu}\},\{\gamma\},\{\beta\}) &= \int_{M}\sqrt{g}\left[ \sum_{i} \left(|\phi_{i}| +v^{\mu i}(\phi_{\mu i}-\nabla_{\mu}\psi_{i}))+ \beta\psi_{i}\right)-\beta\right] \\
&+\sum_{i}\int_{A_{i}}\sqrt{h}\left[\gamma_{ii}(\psi_{i}-1) + \sum_{i \neq j}\gamma_{ij}\psi_{j}\right].
\end{split}
\end{equation}
Dualizing with respect to $\{\phi\}$ we get norm constraints:
\begin{equation} \phi_{\mu i}v^{\mu i} + |\phi_{\mu i}| =0, \quad |v^{\mu i}| \leq 1
\end{equation}
which can be substituted in. After performing integration by parts we have
\begin{equation}
\int_{M}\sqrt{g}\left[\sum_{i}\left(\psi_{i}\nabla_{\mu}v^{\mu i} +\beta\right)-\beta\right] +\sum_{i}\int_{A_{i}}\sqrt{h}\left[\gamma_{ii}(\psi_{i}-1) + \sum_{i \neq j}\gamma_{ij}\psi_{j} + \sum_{j} n_{\mu}v^{\mu j}\psi_{j}\right].
\end{equation}
Next we dualize with respect to $\{\psi\}$.
\begin{equation}
\begin{split}
&\text{On $M$: } \nabla_{\mu}v^{\mu}_{i} =-\beta\\
&\text{On $A_{i}$: } \gamma_{ii} + n_{\mu}v^{\mu}_{i} =0, \quad \gamma_{ij} + n_{\mu}v^{\mu}_{j} =0
\end{split}
\end{equation}
Note that we do not get the normal divergencelessness constraints, flux will be created and destroyed in the bulk. The resulting dual is
\begin{equation}
\max_{\mathcal{V}} \sum_{i}\left(\int_{A_{i}}\sqrt{h} n_{\mu}v^{\mu}_{i} + \frac{1}{n}\int_{M}\sqrt{h}\nabla_{\mu}v^{\mu}_{i}\right).
\end{equation}
We can once again redefine our vector fields
\begin{equation}
w^{\mu} = \frac{1}{n}\sum_{i}v_{i}^{\mu} \quad u^{\mu}_{i} = v^{\mu}_{i}-w^{\mu}
\end{equation}
with the program becoming
\begin{equation}
\max_{\mathcal{U},w} \; \sum_{i}\int_{A_{i}} \sqrt{h}n_{\mu}u^{\mu}_{i} \; \text{s.t. } \nabla_{\mu}u^{\mu}_{i} = 0, \; \nabla_{\mu}w^{\mu} = -\beta, \; |u_{i}+w| \leq 1 , \; \sum_{i}u^{\mu}_{i}=0.
\end{equation}
Comparing to the holographic multipartite EOP proposal essentially what has happened is the entire bulk $M$ has become an unlimited source of $w$ by virtue of the fact that $w$ is not divergenceless and its divergence is an unconstrained function in the bulk when maximizing. It is as if the entire bulk is geometrically connected to the boundary and acts as the purifier. This gives the flows a tremendous amount of freedom when attempting to maximize pushing the minimal surface deep into the bulk. The dualization argument presented can be generalized by also considering a non-pure boundary with region $\mathcal{O}$ as in the case of the multipartite EOP. In doing so all we must change is $w$ in addition to having flux in the bulk can begin and end on $\mathcal{O}$. From this we see starting with this program we can impose $\nabla_{\mu}w^{\mu}=0$ as an explicit constraint and make use of Theorem \ref{T1} to find
\begin{equation}
E_{m}(\mathcal{A}) \geq E_{w}(\mathcal{A}).
\end{equation}

The mechanism for the flows associated with these two surface is almost identical the only difference being whether $w$ is sourced on the boundary, the bulk or both. In general nonzero divergence in the bulk has been associated with quantum corrections to entanglement entropy. While it is not know what information theoretic quantity, if any, this surface is dual to in the boundary CFT, it seems suggestive that quantum corrections may play a role. At the very least these surfaces are intimately connected by the common mechanism for how the corresponding flows are generated. As previously mentioned we expect this to be a general feature of any flow program dual to a bulk surface with intersections.

\section{Squashed entanglement}\label{sec:squashed}
Another notable entanglement measure is the squashed entanglement \cite{Christandl:aa,Tucci:aa} and its multipartite generalization:
\begin{equation}
E_{sq}(\mathcal{A}) = \min I(\mathcal{A} | E).
\end{equation}
Here the minimization is taken over all extensions $\rho_{\mathcal{A}} = \Tr[\rho_{\mathcal{A}E}]$. Most notably the multipartite squashed entanglement has the property
\begin{equation}
E_{sq}(\mathcal{A}) \leq I(\mathcal{A}) \leq E_{p}(\mathcal{A}).
\end{equation}

There are two important points to consider: One, calculating a conditioned entropy measure is implemented using bit threads as a change in the region that a maximal thread configuration can occupy. This can be thought of as imposing additional boundary constraints on the convex program which force the threads to remain in the region. Two, using the surface/state correspondence \cite{Miyaji:2015aa} we can guarantee any choice of extension can be holographically represented as a surface outside the homology region of $\mathcal{A}$. Thus, the most restrictive space we could utilize is $r(\mathcal{A})$ itself. This can also be motivated using MMI which implies utilizing an extension can never be advantageous  \cite{Umemoto2018}.

With these considerations in mind we can make the following observation: during our discussion of the multipartite EOP we were able to explicitly construct a multiflow which calculated $I(\mathcal{A})$ and was contained entirely in the homology region $r(\mathcal{A})$ (see fig.\ \ref{fig:lb}). This shows that holographically the squashed entanglement will be saturated which we view as additional evidence for the conjecture of \cite{Umemoto2018}
\begin{equation}
E_{sq}(\mathcal{A}) = I(\mathcal{A}).
\end{equation}

\bibliographystyle{JHEP}
\bibliography{EoP_BT}
\end{document}